\newcommand{\R}{\mathbb{R}}
\newcommand{\N}{\mathbb{N}}
\newcommand{\Z}{\mathbb{Z}}
\newcommand{\abs}[1]{\lvert #1 \rvert}
\newcommand{\parenthesis}[1]{\left( #1 \right)}
\newcommand{\brackets}[1]{\left[ #1 \right]}
\newcommand{\braces}[1]{\{ #1 \}}
\newcommand{\cS}{\mathcal{S}}
\DeclareMathOperator{\argmin}{argmin}
\newcommand{\dyn}[1]{\Tilde{#1}}
\newcommand{\better}[2]{\mathcal{B}_{#1}^{{#2}}}
\newcommand{\pwin}[1]{j^*\!\parenthesis{#1}}
\newcommand{\swin}[1]{\wm{s^{#1}}}
\newcommand{\swm}[1]{\twm^{#1}}
\newcommand{\pmov}[1]{j^{#1}}
\newcommand{\smov}[1]{s'\!\parenthesis{#1}}
\newcommand{\tmed}{\mathrm{med}}
\newcommand{\twm}{\mathrm{wm}}
\newcommand{\statemed}[1]{\mathrm{med}\!\parenthesis{#1}}
\newcommand{\med}[1]{\mathrm{med}^{#1}}
\newcommand{\wm}[1]{\mathrm{wm}\!\parenthesis{#1}}
\newcommand{\stam}[1]{}
\newcommand{\IS}[2]{P^\sigma(#1,#2)}
\newcommand{\proxyradius}{3}
\newcommand{\voterradius}{1}
\newtheorem{observation}{Observation}
\begin{document}
	\title{Strategic Proxy Voting on the Line}
	%
	%\titlerunning{Abbreviated paper title}
	% If the paper title is too long for the running head, you can set
	% an abbreviated paper title here
	%
	\author{Gili Bielous\inst{1} \and Reshef Meir\inst{2}}
	% \author{Anonymous authors}
	%
	\authorrunning{G. Bielous \and R. Meir}
	% First names are abbreviated in the running head.
	% If there are more than two authors, 'et al.' is used.
	%
	\institute{Technion- Israel Institute of Technology, Haifa, Israel
		\email{gili.bielous@campus.technion.ac.il} \and
		\email{reshefm@ie.technion.ac.il}}
	\maketitle              % typeset the header of the contribution
	\begin{abstract}
		This paper offers a framework for the study of strategic behavior in proxy voting, where non-active voters delegate their votes to active voters. We further study how proxy voting affects the strategic behavior of non-active voters and proxies (active voters) under complete and partial information. We focus on the median voting rule for single-peaked preferences. 
		
		Our results show strategyproofness with respect to non-active voters. Furthermore, while strategyproofness does not extend to proxies, we show that the outcome is bounded and, under mild restrictions, strategic behavior  leads to socially optimal outcomes. 
		
		We further show that our results extend to partial information settings, and in particular for regret-averse agents.
		
		\keywords{Computational Social Choice \and Proxy Voting \and Strategic Voting \and
			Strategyproofness.}
	\end{abstract}
	\section{Introduction}
	
	In the age of internet, we see an increase of platforms and mechanisms for collective decision-making. However, many of these platforms suffer from low participation rates~\citep{schaupp2005voting,jonsson2011user}. Thus, while there is an increase in the ability of individuals to influence collective decision-making in many areas, most decisions are made by a small, non-elected and non-representative groups of active voters. Partial participation may increase vote distortion~\citep{ghodsi2019distortion} (the worst-case ratio between the social cost of the candidate elected and the optimal candidate, first defined in~\citep{procaccia2006distortion}); 
	lead to counter-intuitive equilibria~\citep{desmedt2010equilibria}; and significantly decrease the likelihood of selecting the Condorcet winner (when it exists)~\citep{gehrlein2010voting}. Above all, when the outcome of an election only considers a fraction of all opinions, it is unreasonable to assume that they accurately reflect the aggregated opinions of the collective.
	
	Proxy voting, a long standing practice in politics and corporates~\citep{riddick1991riddick}, and an up-and-coming practice in e-voting and participatory democracies~\citep{petrik2009participation}, aims at mitigating the adverse effects of partial participation. Non-active voters (followers) delegate their vote to another active voter (proxy), thereby at least having some influence on the outcome. \citet{cohensius2016proxy} proposed a model where the voters are sampled from a given distribution of non-atomic voters. Among them are a subset of proxies, with voting power proportional to the population mass that delegates to them. The outcomes of various voting rules, in particular the median voting rule, as determined by the voters are compared against the outcome via proxy voting. They show that for most settings, the outcome via proxy voting improves the accuracy with respect to the aggregated social preference of the entire population.
	
	However, such delegation changes the power dynamic of voters by shifting some of the voting power to proxies. While much consideration is granted in the literature of social choice for the strategic behavior of voters~\citep{gibbard1973manipulation,satterthwaite1975strategy} and candidates~\citep{dutta2001strategic,sabato2017real}, there is little consideration of the \emph{strategic behavior of proxies or followers} in proxy-mediated settings. ~\citet{cohensius2016proxy} consider strategic participation (i.e. selecting to participate or abstain) with mostly positive results. Notably, they show convergence to an equilibrium with the same accuracy as without strategic behavior using proxy voting. Yet they pose the question of strategic behavior of followers as an open question, which was part of the inspiration to the current study.
	
	Moreover, it is common to study strategic behavior in adversarial settings assuming complete information. This makes sense as a worst-case assumption for strategyproofness, but treating uncertainty is unavoidable when we are trying to model actual strategic voting and predict its implications (for an overview of uncertainty in voting and equilibrium models, see~\citet{meir2018strategic} Chapters 6 and 8). 
	In the context of proxy voting, assuming full information is even less reasonable: by delegating their vote, followers may wish to avoid the cognitive strain, time loss and other costs associated with determining and communicating their position. Thus, a setting that requires followers to explicitly define their positions negates these benefits of proxy voting for followers. 
	
	Therefore, it makes more sense that active voters set their strategies based on partial information on the positions of potential followers.  While there are many ways to model such uncertainty, we adopt the framework of ~\citet{reijngoud2012voter} that allows for simple and flexible definition of information sets.
	
	\subsection{Related Work}
	
	The effects of delegation on the accuracy of results have been recently studied in the context of \emph{liquid democracy}, a delegation model where voters may continue to transitively delegate their votes. \citet{kahng2021liquid,caragiannis2019contribution} show that the concentration of power in liquid democracy can be so severe that it leads to low accuracy with respect to an assumed ground truth. Subsequent work attempted to limit power concentration, either by an impartial planner~\citep{golz2021fluid} or in by altering the delegation mechanism~\cite{halpern2021defense}. In contrast, as previously mentioned \citet{cohensius2016proxy} achieved positive results for one-step (proxy) delegation.
	
	There are two closely related spatial models to our setting. The first is the model of ~\citet{cohensius2016proxy} mentioned above. The second is \emph{Strategic Candidacy Games} proposed by ~\citet{sabato2017real}, where candidates are assumed to have self-supporting preferences over possible outcomes. Thus, candidates have incentives to strategize even when they cannot guarantee their own win. Their results show the existence of a Nash equilibrium for Condorcet-consistent voting rules for voters with symmetric single-peaked preferences and every set of preferences for candidates. Our model differs and generalizes in the following sense. First, in our model, candidates' preferences are based on the outcome, not the identities of candidates. In particular, the preferences are determined \textit{ex-post} for a given state. Second, candidates in \cite{sabato2017real} are weightless, while voters (equivalent to followers in our model) are atomic. Our results demonstrate a stronger claim than the mere existence of a Nash Equilibrium, even for this generalized model. In particular, we show convergence to NE similar to the one described by their work (subject to certain restrictions). As a result, our work significantly expands upon theirs. Moreover, some of our results become trivialized when proxies are non-atomic.
	
	\subsection{Contribution and Paper Structure}
	The use of spatial models for the study of behavior and results of voting mechanisms have been first introduced by ~\citet{hotelling1929stability} and~\citet{downs1957economic}. Our model follows this approach, 
	is too a spatial model, assuming the political spectrum is represented as positions on the real line. We focus on the median voting rule that has been shown to be (group) strategyproof for single-peaked preferences~\citep{black1948rationale,moulin1980strategy}.
	
	It is important to stress that the objectives of candidates in Hotelling-Downs is different than our setting. Proxies want to maximize the outcome with respect to their preferences whereas in HD candidates wish to maximize their votes. While vote maximizing seems like a winning strategy in this context, in fact the winning strategy is to restructure the partition of votes. We show this in some section.
	
	Our initial study considers strategyproofness and manipulability with respect to both followers and proxies positions. Then, we consider sequences where proxies react to other proxies' actions. Finally, we turn to study strategic behavior in partial information settings.
	
	Our contribution is as follows:
	\begin{itemize}[topsep=-1mm,itemsep=-4pt,partopsep=0pt,parsep=1ex]
		\item Followers never have an incentive to misreport their position (or, equivalently, to follow a proxy other than the nearest one).[Section~\ref{sec:proxy_manipulations}]
		\item Proxy voting with the median voting rule is \emph{manipulable} with respect to proxy positions, and we provide a complete characterization of manipulable scenarios.[Section~\ref{sec:proxy_manipulations}]
		\item In sequences of manipulations, the outcome of each step is bounded.[Section~\ref{sec:proxy_manipulation_better_outcomes}]
		\item Under mild restrictions, sequences of manipulations converge to an optimal equilibrium.[Section~\ref{sec:proxy_manipulation_better_outcomes}]
		\item Manipulations under partial information may converge to a worse equilibrium than without delegation.[Section~\ref{sec:partial_information}]
		\item If agents are regret-averse, then manipulations converge to a socially optimal equilibrium even with partial information.[Section~\ref{sec:partial_information}]
	\end{itemize}
	
	A preliminary version of this paper was presented in EUMAS2022~\citep{bielous2022proxy}. In this version, we offer two significant additions to our results. First, we generalize our positive results for unrestricted manipulations by providing a bound on the outcome in strategic proxy settings.
	Second, in this version we show that our results extend beyond complete information settings to partial information scenarios. We show that even with limited information, the proxy voting framework retains its desirable properties. In particular, we examine the implications for regret-averse agents and find that our results hold in these cases as well.
	
	By investigating the effects of strategic behavior in proxy voting under different information conditions, our study contributes to a deeper understanding of the dynamics and potential benefits of this voting mechanism.
	
	%%%%%%%%%%%%%%%%%%%%%%%%%%%%%%%%%%%%%%%%%%%%%%%%%%%%%%%%%%%%%%%%%%%%%%%%
	
	\section{Model and Preliminaries}
	
	We define the model of \emph{Strategic Proxy Games (SPG)} as follows.
	
	\paragraph{Model.}
	Our basic model follows the one in by \citet{cohensius2016proxy}. There is a set of proxies (active agents) $M = \braces{1,...,m}$, and a set of followers $N = \braces{1,...,n}$. We refer to the set of all agents $N \cup M$ as 'voters'. Each voter $1 \le i \le n+m$ has a position $p_i \in \R$ along the political spectrum. True positions are $p \in \R^{m+n}$, where $p|_M := (p_j)_{j\in M}$ and $p|_N := (p_i)_{i\in N}$.
	A \emph{state} is a vector $s \in \R^m$, such that $s_j$ is the position Proxy $j$ declares. We denote by $\parenthesis{s_{-j},s_j'}$ the state that is equal to $s$ except for the strategy of Proxy $j$, that is $s_j'$. 
	
	\paragraph{Delegation.}
	We assume that each follower delegates their vote to the nearest proxy (this is known as the Tullock delegation model~\citep{tullock1967proportional}). Formally, given a vector of positions $p$ and a state $s$, each Follower $i \in N$ delegates their vote to Proxy $j \in M$, where $$\varphi_i\parenthesis{s} := \argmin_{j \in M} \abs{s_j - p_i}$$
	We assume the existence of a deterministic tie-breaking scheme that only depends on the state of voters.
	All proxies delegate their vote to themselves.
	
	\paragraph{Preferences}
	Voters are assumed to have single-peaked preferences with peak at $p_i$. That is, for every $x,y \in \R$, if $x < y \le p_i$, then Voter $i$ prefers $y$ to $x$, and if $p_i \le x < y$, then Voter $i$ prefers $x$ to $y$. For followers we further assume preferences are symmetric, that is, for every $x,y \in \R$, if $\abs{x-p_i} < \abs{y-p_i}$, then Follower $i$ prefers $x$ to $y$. Thus, preferences of voters are consistent with the delegation model.
	
	\begin{example}
		\label{xmpl:perm}
		Consider the SPG appearing in Figure~\ref{fig:model_example}.
		
		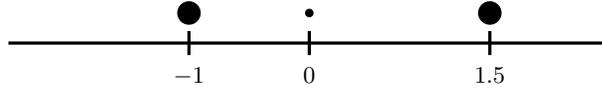
\begin{figure}[ht!]
			\centering
			\begin{tikzpicture}[
				line/.style={draw=black, very thick},
				proxy/.style={circle,draw=black,fill,inner sep=\proxyradius pt},
				voter/.style={circle,draw=black,fill,inner sep=\voterradius pt},
				x=0.8mm, y=0.8mm, z=1mm]
				
				\draw[line, -](-50,30)--(50,30);
				
				\foreach \x/\t in {-1/1,0/0,1.5/1} {
					\draw[line](20*\x,28)--(20*\x,32) node[below=10]{$\x$};
					\ifthenelse{\t=1}
					{\draw (20*\x,35) node[proxy] {};}
					{\draw (20*\x,35) node[voter] {};}
				}
			\end{tikzpicture}
			\vspace{-4mm}
			\caption{An example SPG. Large dots indicate the positions of proxies, small dots indicate the positions of followers.\vspace{-4mm}}
			\label{fig:model_example}
		\end{figure}
		There are two proxies $\braces{1,2}$ with positions $p_1=-1$ and $p_2=1.5$.  There is a single follower $\braces{3}$ with position $p_3=0$. In the truthful state $s=p|_M=(-1,1.5)$, the follower delegates their vote to $\varphi_3\parenthesis{s}=1$. Thus, there are two votes to $-1$ and a single vote to $1.5$.
	\end{example}
	
	\paragraph{Weighted median.}
	Given a finite vector $\Vec{s} \in \R^m$ such that each $s_i \in \Vec{s}$ has weight $w_{s_i} \in \R^{+}$, let $W = \sum_{s_i \in \Vec{s}} w_{s_i}$. The weighted median of $\Vec{s}$, denoted $\statemed{\Vec{s};w}$, is $s_i \in \Vec{s}$ such that
	$$\sum_{\braces{s_j \in \Vec{s} \setminus \braces{s_i}:s_j \le s_i}} w_{s_j} \le \frac{W}{2}\quad \mathrm{and}\quad \sum_{\braces{s_j \in \Vec{s} \setminus \braces{s_i}:s_j \ge s_i}} w_{s_j} \le \frac{W}{2}.$$
	That is, the sum of weights of elements that are smaller than $s_i$ is at most half the total sum of weights, and the same holds for the sum of weights of elements that are larger than $s_i$.
	
	\paragraph{Weighted median voting rule.}
	Next, we define the Weighted Median voting rule. The weight of each proxy is defined as the number of delegations to them. Then, the \emph{weighted median voting rule} (WM) selects the position that is the weighted median of proxy positions.
	Formally:
	$$\statemed{s,p|_N} := \statemed{(s,p|_N) ; 1} $$
	is the unweighted median of all voters (proxies and followers) at state $s$, and the weighted median is
	$$\wm{s,p|_N} := \statemed{s; w} \text{ where } w_j := \abs{\braces{i\in N: \varphi_i(s)=j}} +1,$$
	and we often omit $p|_N$ when clear from context. Ties break lexicographically.
	
	At state $s$, the WM voting rule selects $\wm{s,p|_N} \in \R$ as the winner. Note that $\wm{s}=s_j$ for some $j\in M$, and we denote this selected proxy  by $\pwin{s,p|_N} \in M$.
	We denote the median and weighted median in the true state $p$ by  $\tmed:=\statemed{p}$ and $\twm:=\wm{p}$, respectively.
	
	The WM winner in Example~\ref{xmpl:perm} is the position $-1$. This is because the proxy at $-1$ receives a total of $2$ votes from the single follower who delegates to them, whereas the proxy at $1.5$ receives $1$ vote.
	
	\paragraph{Strategyproofness and Manipulations.}
	We say that a voter is \emph{truthful} if they declare their true position $p_i$. Voters may lie about their positions, and we assume that voters are rational, that is, they lie only if by lying the outcome changes in their favor. 
	
	We say that $p'_i\neq p_i$ is a \emph{manipulation} for voter $i \in N \cup M$  if voter~$i$ strictly prefers $\wm{p'}$ to $\wm{p}$, where  $p'=\parenthesis{p_{-i},p_i'}$.
	
	A voting rule is \emph{strategyproof} if for every vector of true positions $p$, no voter has a manipulation; otherwise, it is \emph{manipulable}. The Median voting rule, i.e. the voting rule that selects the unweighted median is known to be (group) strategyproof for single-peaked preferences~\citep{black1948rationale,moulin1980strategy}, and thus in particular to voters who try to minimize their distance as in our model.
	
	%%%%%%%%%%%%%%%%%%%%%%%%%%%%%%%%%%%%%%%%%%%%%%%%%%%%%%%%%%%%%%%%%%%%%%%%
	
	\section{Strategyproofness of Weighted Median}
	
	\label{sec:proxy_manipulations}
	\subsection{Manipulation by Followers}
	We begin our analysis by showing that strategyproofness extends to Weighted Median with respect to followers' positions. In their work, \citet{cohensius2016proxy} demonstrate that for any distribution of followers and proxies, the WM winner is the proxy who is closest to the true median. That is, the proxy $j^*$ selected by the weighted median rule is the one closest to the (unweighted) median of the entire population. Equivalently, it is the proxy selected by the median voter in the population. Formally, let $i^*$ be the median voter in profile $(s,p|_N)$. Then:
	\begin{lemma}[\cite{cohensius2016proxy}]
		\label{lm:nearest_proxy_to_median_wins}
		$$j^*(s)= \varphi_{i^*}(s)=\argmin_{j\in M}|s_j-\statemed{s,p|_N}|.$$
	\end{lemma}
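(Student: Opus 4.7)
The plan is to reduce the weighted median of proxies to an unweighted median via a virtual-voter transformation. Replace each follower $i \in N$ by a ``virtual voter'' located at $s_{\varphi_i(s)}$, i.e.\ at the position of the proxy they delegate to. Since each proxy $j$ then has $1 + |\{i \in N : \varphi_i(s) = j\}| = w_j$ virtual copies at position $s_j$, the weighted median $\wm{s, p|_N}$ coincides with the unweighted median of the virtual profile. It therefore suffices to show that this unweighted median falls on the proxy nearest to $\statemed{s,p|_N}$.

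Let $\hat{\jmath} := \argmin_{j \in M} |s_j - \statemed{s,p|_N}|$ and write $L := s_{\hat{\jmath}}$. Let $a$ and $b$ denote the midpoints between $L$ and the positions of $\hat{\jmath}$'s closest proxy neighbors on the left and right (set $a := -\infty$ or $b := +\infty$ when no such neighbor exists). By the defining property of $\hat{\jmath}$ one has $\statemed{s,p|_N} \in [a,b]$ (invoking the tie-breaking rule if necessary), and by construction there is no proxy strictly between $a$ and $L$ or between $L$ and $b$. Hence a follower delegates to a proxy at position $< L$ iff their position is $< a$, and symmetrically on the right. From this, the count of virtual voters strictly to the left of $L$ equals the count of original voters (proxies and followers) strictly to the left of $a$, and analogously on the right with $b$. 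Since $a \le \statemed{s,p|_N} \le b$, the median property of $\statemed{s,p|_N}$ in the profile $(s,p|_N)$ forces both counts to be at most $(m+n)/2$, so $L$ is a median of the virtual profile. Therefore $\hat{\jmath} = \pwin{s,p|_N}$, and because the median voter $i^*$ sits in $[a,b]$, its nearest proxy is also $\hat{\jmath}$, giving $\varphi_{i^*}(s) = \hat{\jmath}$.

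The main obstacle I expect is keeping the boundary and tie cases straight: followers equidistant between two proxies, voters sitting exactly at $a$, $b$, or $\statemed{s,p|_N}$, and proxies tied in distance from the median. These are all resolved by the fixed deterministic tie-breaking scheme assumed in the model, but they require stating the inequalities above with strict/weak direction consistent with that scheme. Once that bookkeeping is fixed, the argument is just the two one-sided counting identities combined with the median inequality at $\statemed{s,p|_N}$.
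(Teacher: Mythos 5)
The paper does not actually prove this lemma; it imports it from \citet{cohensius2016proxy} without reproducing an argument, so there is no in-paper proof to compare against. Your reconstruction is correct and is essentially the standard argument: replacing each follower by a copy at their delegate's position turns $\wm{s,p|_N}$ into the unweighted median of the virtual profile, and the two counting identities (virtual voters strictly left of $L$ $=$ true voters strictly left of $a$, and symmetrically with $b$) combined with $a \le \statemed{s,p|_N} \le b$ give the median condition at $L$. The one place to be careful, which you already flag, is that the identity ``followers delegating to a proxy left of $L$ are exactly the followers left of $a$'' and the claim $\statemed{s,p|_N}\in[a,b]$ both degrade to weak inequalities under ties, so the strict/weak directions must be chosen consistently with the fixed tie-breaking scheme; with that bookkeeping done, the proof is complete, and the final step identifying $\varphi_{i^*}(s)$ with $\hat{\jmath}$ (including the case where the median voter is itself a proxy, which self-delegates) goes through as you state.
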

	
	Next, we prove that for WM, followers do not have manipulations. Note that for followers, manipulations are in effect delegation to another proxy.
	
	\begin{theorem}
		\label{thm:voters_strategyproofness}
		WM is strategyproof w.r.t followers' positions.
	\end{theorem}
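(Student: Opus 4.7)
The plan is to use Lemma~\ref{lm:nearest_proxy_to_median_wins} to reduce the claim to a monotonicity argument about the unweighted median of the full profile. Fix a follower $i\in N$ with true position $p_i$ and an alternative report $p_i'$; write $p' := \parenthesis{p_{-i},p_i'}$, $w := \wm{s,p|_N}$, $w' := \wm{s,p'|_N}$, $\mu := \statemed{s,p|_N}$, and $\mu' := \statemed{s,p'|_N}$, so that by the lemma, $w$ and $w'$ are the proxies closest to $\mu$ and $\mu'$ respectively. Without loss of generality I take $p_i \le \mu$ (the other direction is symmetric). I would first dispose of the boundary case $p_i = \mu$: the lemma then says $w$ is the proxy nearest to $p_i$ itself, so $\abs{s_j - p_i} \ge \abs{w - p_i}$ for every proxy $s_j$, and in particular for $w'$; hence the follower cannot benefit.

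For the main case $p_i < \mu$, I would combine the classical strategyproofness of the unweighted median with the monotonicity of the nearest-proxy map. Since $p_i < \mu$, any $p_i' \le \mu$ preserves $\mu$ as a median of the new profile (the rank of $\mu$ is unchanged), while any $p_i' > \mu$ can only shift the median weakly to the right; so $\mu' \ge \mu$. Because $x \mapsto \argmin_{j\in M} \abs{s_j - x}$ is a non-decreasing step function of $x$, this forces $w' \ge w$. It then remains to show that $w' \ge w$ cannot strictly reduce the follower's distance. If $w \ge p_i$, this is immediate from $w' \ge w \ge p_i$. Otherwise $w < p_i < \mu$, and here I invoke Lemma~\ref{lm:nearest_proxy_to_median_wins} a second time: since $w$ is the proxy nearest to $\mu$, no proxy can lie in $(w, 2\mu - w)$, for any such proxy would be strictly closer to $\mu$ than $w$ itself. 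Hence every proxy $s_j > w$ satisfies $s_j \ge 2\mu - w > 2p_i - w$, equivalently $\abs{s_j - p_i} > \abs{w - p_i}$, and the same inequality holds for $w'$.

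The main difficulty, as I see it, is precisely the subcase $w < p_i < \mu$: intuitively one might worry that shifting the median further right allows the winner to ``hop over'' $p_i$ and land strictly closer to it, but the gap argument above shows that the open interval $(w, 2\mu - w)$ containing $p_i$ is forced to be proxy-free by the nearest-proxy condition on $\mu$, so no new winner can ever sit inside it. The remaining monotonicity steps and the boundary case are routine consequences of the classical strategyproofness of the unweighted median for single-peaked preferences and the piecewise-constant structure of the nearest-proxy map.
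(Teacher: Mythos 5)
Your proposal is correct and follows essentially the same route as the paper's proof: both reduce the claim via Lemma~\ref{lm:nearest_proxy_to_median_wins} to tracking the proxy nearest the unweighted median, use monotonicity of the median under the follower's deviation, and resolve the only delicate subcase (winner on the far side of $p_i$ from the median) by observing that no proxy can lie strictly closer to the median than the current winner — your ``proxy-free interval $(w, 2\mu - w)$'' is exactly the inequality $\abs{\statemed{p}-p_j} \le \abs{\statemed{p}-p_{j'}}$ used in the paper's third case. The only organizational difference is that you argue directly by first establishing $w' \ge w$ rather than by contradiction with a three-way case split, which is a cosmetic variation.
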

	
	\begin{proof}
		Assume towards contradiction that for some $p$, there exists a follower $i \in N$ who has a manipulation $p_i'$, and denote $p'=(p_{-i},p'_i)$. W.l.o.g, assume $p_i \ge \statemed{p}$. 
		
		By Lemma~\ref{lm:nearest_proxy_to_median_wins} the weighted median rule returns the proxy $j\in M$ closest to $\statemed{p}$ before the manipulation, and $j'$ (who is closest to $\statemed{p'}$) after the manipulation. 
		
		Accordingly, the winning positions before and after are $p_j = \wm{p|_M,p|_N}$ and $p_{j'} = \wm{p|_M,p'|_N}$.
		
		As $p_i'$ is a manipulation, we have that $p_{j'}$ is strictly closer to $p_i$ than $p_j$, and in particular $j\neq j'$. This must mean that $\statemed{p'}\neq \statemed{p}$, and by strategyproofness of the unweighted median, $\statemed{p'}<\statemed{p}<p_i$.
		
		By monotonicity of the unweighted median this means $p'_i<p_i$, and thus $p_{j'}<p_j$.
		One of the following must hold:
		\begin{itemize}
			\item $p_i \leq p_{j'}< p_j$: Since $\statemed{p} \le p_i$, it follows that $j$ is closer to $\statemed{p}$ than $j'$, in contradiction to the definition of $j$.
			\item $p_{j'}<p_j \leq p_i$: This is not a manipulation for $i$, as the outcome gets farther away.
			\item $p_{j'} < p_i < p_j$: Since $p_i \ge \statemed{p}$, and since $\abs{\statemed{p}-p_j} \leq \abs{\statemed{p}-p_{j'}}$, we get $\abs{p_i-p_j} < \abs{p_i-p_{j'}}$. Thus, by symmetric single-peakedness Follower $i$ prefers $p_j$ to $p_{j'}$, in contradiction to $p_i'$ being a manipulation.\qed
		\end{itemize}
	\end{proof}
	
	\begin{remark}
		Another interpretation of Theorem~\ref{thm:voters_strategyproofness} is that under the weighted median voting rule, it is a dominant strategy for a follower to support her nearest proxy. In other words, the theorem \emph{justifies} the Tullock delegation model.
	\end{remark}
	As Theorem~\ref{thm:voters_strategyproofness} shows that WM is strategyproof with respect to the positions of followers, we can henceforth consider them as non-strategic agents. In what follows, followers are considered to always be truthful. In particular, the position vector $p|_N$ is fixed.
	
	\subsection{Manipulation by Proxies }
	We continue to analyze the strategic behavior of proxies. While we obtain a positive result of strategyproofness when only followers are strategic, the same does not hold for proxies, as demonstrated by the following example.
	
	\begin{example}
		\label{xmpl:proxy_manipulable}
		Recall the SPG appearing in Example~\ref{xmpl:perm}. The truthful WM winner is $\wm{s} = -1$, and the winning proxy is $1$. Consider the state $s=\parenthesis{p_1, 1-\varepsilon}$ for some $0 < \varepsilon < 2$.
		\begin{figure}[ht]
			\centering
			\begin{tikzpicture}[
				line/.style={draw=black, very thick}, 
				x=0.8mm, y=0.8mm, z=1mm]
				
				\draw[line, -](0,30)--(100,30);
				\draw[line](30,28)--(30,32) node[below=10]{$-1$};
				\draw (30,35) node[circle,fill,inner sep=3pt] {};
				\draw (50,35) node[circle,draw=black,fill=none,inner sep=1pt] {};
				\draw[line](50,28)--(50,32) node[below=10]{$0$};
				\draw[line](65,28)--(65,32) node[below=10]{$1-\varepsilon$};
				\draw (65,35) node[circle,fill,inner sep=3pt] {};
				\draw (65,40) node[circle,fill,inner sep=1pt] {};
				\draw[line](80,28)--(80,32) node[below=10]{$1.5$};
				\draw (80,35) node[circle,draw=black,fill=none,inner sep=3pt] {};
			\end{tikzpicture}
			\caption{The SPG with manipulation by Proxy $2$. Large empty dot is Proxy $2$'s true position, the manipulation is reporting $1-\varepsilon$. The single follower delegates their vote to Proxy $2$.}
			\label{fig:manipulation}
			\vspace{-1em}
		\end{figure}
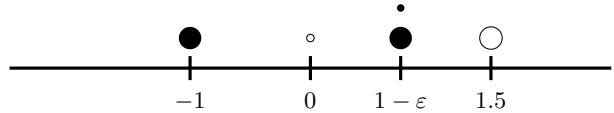
		
		The single follower delegates their vote to Proxy $2$. There are two votes to $s_2 = 1-\varepsilon$ and only one vote to $s_1 = -1$, thus, $\wm{s}=1-\varepsilon$.
		As preferences are single-peaked and Proxy $2$'s peak is at $p_2 = 1.5$, we get that Proxy $2$ strictly prefers $1-\varepsilon$ to $-1$. Hence, $1-\varepsilon$ is a manipulation for Proxy $2$.
	\end{example}
	
	The counter-example presented in Example~\ref{xmpl:proxy_manipulable} can be easily expanded to any number of followers and proxies. However, rather than formally constructing such example, The following theorem provides a complete characterization of manipulable scenarios. As a consequence, it shows that manipulations exist under very simple and reasonable conditions.
	
	\begin{theorem}
		\label{thm:manipulable_proxy_characterization}
		There is a proxy that has a manipulation in the truthful state $p|_M$ iff it holds that $p_j \ne \tmed$ for all $1 \le j \le m$, and there are proxies $j,j' \in M$ such that $p_j < \tmed < p_{j'}$.
	\end{theorem}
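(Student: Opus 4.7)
The plan is to handle each direction separately, using the contrapositive for necessity. Lemma~\ref{lm:nearest_proxy_to_median_wins} is the main tool throughout, as it pins the winner to the proxy closest to the current unweighted median.

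For sufficiency, assume both conditions. Then the winning position $\twm$ satisfies $\twm \ne \tmed$; WLOG $\twm > \tmed$, and condition 2 supplies a proxy $j$ with $p_j < \tmed$. I will show $j$ manipulates by reporting $p_j' := \tmed - \epsilon$ for sufficiently small $\epsilon > 0$. Three observations complete the construction: (i) since $j$'s position moves only within $\{x < \tmed\}$, the rank of $\tmed$ in the sorted profile is unchanged, so the new unweighted median is still $\tmed$; (ii) for $\epsilon$ smaller than $\min_{\ell \ne j}|p_\ell - \tmed|$ (positive by condition 1), Lemma~\ref{lm:nearest_proxy_to_median_wins} identifies $j$ as the unique closest proxy to $\tmed$, so the new outcome is $\tmed - \epsilon$; (iii) shrinking $\epsilon$ below $\tmed - p_j$ as well yields $p_j < \tmed - \epsilon < \twm$, and single-peakedness gives $j$'s strict preference for $\tmed - \epsilon$ over $\twm$.

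For necessity I prove the contrapositive in two cases. In Case A, some proxy sits at $\tmed$, so $\twm = \tmed$. For any proxy $k$ with $p_k > \tmed$ (the case $p_k < \tmed$ is symmetric and $p_k = \tmed$ is trivial), any deviation yields outcome $\le \tmed$: reporting $p_k' > \tmed$ keeps the median at $\tmed$ and the proxy at $\tmed$ wins at distance $0$; reporting $p_k' \le \tmed$ shifts the new median $\tmed'$ weakly left, and any proxy strictly above $\tmed$ is strictly farther from $\tmed'$ than the proxy at $\tmed$, so the winner sits at a position $\le \tmed$. Single-peakedness then forbids a strict improvement over $\tmed$. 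In Case B, no proxy is at $\tmed$ and WLOG all proxies lie below $\tmed$, so $\twm = \max_\ell p_\ell$, achieved by some $j^*$ already at its peak. For any other proxy $k$, $p_k < \twm$, and moving $k$ from $p_k < \tmed$ to any report $p_k'$ shifts the new median $\tmed'$ to some value $\ge \tmed > \twm$. For every original proxy $\ell \ne j^*$, $p_\ell < \twm < \tmed'$ gives $|\tmed' - p_\ell| > \tmed' - \twm = |\tmed' - \twm|$, so $j^*$ beats all original competitors; for $k$'s new position to beat $j^*$ one needs $|\tmed' - p_k'| < \tmed' - \twm$, forcing $p_k' > \twm$. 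Either way the new outcome is $\ge \twm > p_k$, and by single-peakedness not strictly preferred to $\twm$.

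The main obstacle is the necessity direction: one must track carefully how the unweighted median shifts under arbitrary reports $p_k'$ and argue that in both failure cases the winner cannot land strictly between $p_k$ and $\twm$. This ultimately rests on a monotonicity fact for the unweighted median (it never moves toward $k$'s peak when $k$ lies on the ``wrong'' side of $\tmed$), combined with Lemma~\ref{lm:nearest_proxy_to_median_wins} pinning the winner to the proxy closest to this median.
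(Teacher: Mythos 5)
Your proposal is correct and follows essentially the same route as the paper: sufficiency by having a proxy on the opposite side of the winner deviate to (a point $\epsilon$-close to) $\tmed$, which leaves the unweighted median fixed and makes that proxy the winner by Lemma~\ref{lm:nearest_proxy_to_median_wins}; necessity by the same two-case analysis (a proxy at $\tmed$, or all proxies on one side) using median monotonicity. The only cosmetic difference is that the paper's manipulator reports exactly $\tmed$ rather than $\tmed-\epsilon$.
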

	\begin{proof}
		``$\Leftarrow$'' Suppose $p_j < \tmed < p_{j'}$, and w.l.o.g. let $j$ be the closest proxy to the median, so $\pwin{p}=j$, and $\twm=p_j$. 
		As preferences are single-peaked, Proxy $j'$ prefers $\tmed$ to $\wm{p}$. We proceed by showing that moving to $p'_{j'} = \tmed$ is a manipulation for Proxy~$j'$, as in Fig.~\ref{fig:f2}.
		
		Indeed, denote $p'=(p_{-j},p'_{j'})$. Since $p_j,p'_{j'}$ are on the same side of $\tmed$, we have that the position of the median does not change, i.e., $\statemed{p'}=\tmed$. 
		
		By Lemma~\ref{lm:nearest_proxy_to_median_wins}  we get that $\wm{p'}$ is  the proxy closest to $\statemed{p'}$, whose position is 
		$p'_{j'}=\tmed$.
		Since $\wm{p'}=p'_{j'}$ is strictly between $p_{j'}$ and $\twm$, this is a manipulation for $j'$.

		\begin{figure}[t]
			\centering
			\vspace{-0mm}
			\begin{tikzpicture}[
				line/.style={draw=black, very thick},
				proxy/.style={circle,draw=black,fill,inner sep=\proxyradius pt},
				voter/.style={circle,draw=black,fill,inner sep=\voterradius pt},
				x=0.8mm, y=0.8mm, z=1mm]
				
				\draw[line, -](-50,0)--(50,0);
				
				\foreach \x/\proxyind/\t in {-1/1/\twm=p_j,0/0/\tmed,1.5/1/p_{j'}} {
					\draw[line](20*\x,-2)--(20*\x,2) node[below=10]{$\t$};
					\ifthenelse{\proxyind=1}
					{\draw (20*\x,5) node[proxy] {};}
					{\draw (20*\x,5) node[voter] {};}
				}
				
				\draw[line,-latex] (20*1.5-2,7) to[out=155,in=25]  node[above] {$s_{j'}$} (0,7);
			\end{tikzpicture}
			\caption{A proxy with a manipulation.\label{fig:f2}
				\vspace{-1em}
			}
		\end{figure}
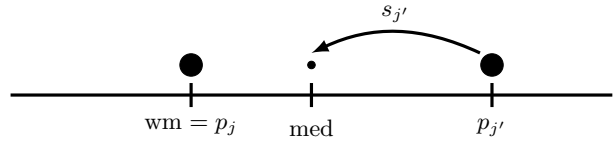
		
		``$\Rightarrow$''. If there is some proxy $k$ such that $p_{k} = \tmed$, then by Lemma~\ref{lm:nearest_proxy_to_median_wins}, $\tmed$ is the WM winner. Therefore, every proxy with position at $\tmed$ have their peak outcome, so there is no more preferred outcome for them. Consequently they do not have a manipulation.
		Further, no position is closer to $\tmed$, thus no other proxy can change the outcome by reporting a position that is closer to the median. Therefore, they can only manipulate by reporting a position that changes the position of the median. 
		
		Assume towards contradiction that there is such a proxy $k$ with a manipulation $p'_{k}$, and let $p'=\parenthesis{p_{-k},p'_{k}}$. W.l.o.g assume that $p_{k} > \tmed$. Then, the position of the median would only change in $p'$ if Proxy $k$ reports a position on the other side of $\tmed$, i.e. $p'_{k} < \tmed$. We get that $\statemed{p'}<\tmed$. Since $p'_{k}$ is a manipulation, the outcome of $p'$ holds 
		$$\wm{p'} \le \statemed{p'} < \tmed=\twm < p_{k}.$$
		The first inequality is since $\statemed{p'}$ is the maximal position $p'_i<\tmed$ for any $i\in M\cup N$. By single-peakedness $k$ prefers $\tmed$ to $\wm{p'}$, in contradiction to $p'_{k}$ being a manipulation.
		
		Finally, assume that for all proxies $p_{k} \le \twm < \tmed$. Clearly the proxy $j$ who is closest to $\tmed$ have no manipulation since their position $p_j$ wins. Also note that in any manipulation $p'_k$ we have $\statemed{p'}\geq \tmed$. The only way for $k$ to change the outcome is by becoming the winner themselves, i.e. reporting a position closer to $\statemed{p'}$ than $p_j$. However since $p_j < \tmed \leq \statemed{p'}$, we have $p_k < \twm = p_j < \wm{p'}$ and thus Proxy $k$ strictly loses from such a move.
		\qed
		
		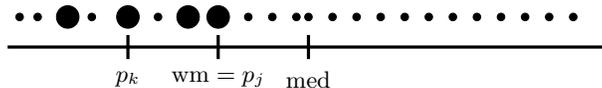
\begin{figure}[t]
			\centering
			\begin{tikzpicture}[
				line/.style={draw=black, very thick},
				proxy/.style={circle,draw=black,fill,inner sep=\proxyradius pt},
				voter/.style={circle,draw=black,fill,inner sep=\voterradius pt},
				x=0.8mm, y=0.8mm, z=1mm]
				
				\draw[line, -](-50,0)--(50,0);
				
				\draw[line](-30,-2)--(-30,2) node[below=10]{$p_k$};	
				\draw[line](-15,-2)--(-15,2) node[below=10]{$\twm=p_j$};
				\draw[line](0,-2)--(0,2) node[below=10]{$\tmed$};
				
				\foreach \x in {-4,-3,-2,-1.5} {
					\draw (10*\x,5) node[proxy] {};
				}
				
				\foreach \x in {-4.5,-4.8,-3.6,-2.5,-1,-0.6,-0.2,0} {
					\draw (10*\x,5) node[voter] {};
				}
				
				\foreach \x in {1,...,11} {
					\draw (4*\x,5) node[voter] {};
				}
			\end{tikzpicture}
			\caption{If the positions of all proxies are on the same side of $\tmed$, then non of them have a manipulation.\vspace{-4mm}}
		\end{figure}
	\end{proof}

	%%%%%%%%%%%%%%%%%%%%%%%%%%%%%%%%%%%%%%%%%%%%%%%%%%%%%%%%%%%%%%%%%%%%%%%%
	
	\section{Manipulations for Better Outcomes}
	\label{sec:proxy_manipulation_better_outcomes}
	
	Consider the manipulation described in the proof of Theorem~\ref{thm:manipulable_proxy_characterization}. The outcome of the manipulated state is the true median, which is the outcome of the median voting rule with complete participation. That is, in this case the manipulation has a positive effect on the accuracy of the outcome!
	
	\paragraph{Manipulations are often beneficial.}
	The example above may seem counter-intuitive, but it is in fact common that strategic behavior improves the outcome, even when applied repeatedly by all voters. This is true especially in simple voting rules like Plurality, as manipulations play a form of compromise that lets voters avoid socially-inferior outcomes. This was shown both in theoretical analysis~\cite{grandi2013restricted} and in simulations~\cite{meir2014local,grandi2013restricted}.
	
	\paragraph{Evaluation.}
	This brings about the natural question \emph{does strategic voting of proxies always improve the outcome?}
	We note that in general, questions about `good outcomes' in voting are tricky, since there are numerous ways to evaluate the outcome of a voting rule, e.g. if we define the optimum to be the outcome of the used rule itself on the truthful votes (as in \cite{branzei2013bad}), then strategic behavior is bad by definition. 
	
	However, in the context of delegation on the real line, a natural evaluation metric is the distance to the `ideal point' that would be selected if everyone had voted, i.e. to $\statemed{p}$. This is exactly the approach followed by \citep{cohensius2016proxy}, when showing (non-strategic) delegation improves the outcome. 
	
	Another natural measure is the \emph{social cost}, i.e. the sum of distances of all voters from the selected position. When using the median voting rule without delegation, the two goals coincide, as the median is known to minimize the social cost. However a good approximation of the true median may have poor social cost, and vice versa.
	\footnote{Suppose there are $k$ followers and 1 proxy $j$ on 0, $k+2$ followers on 1, and a second proxy $j'$ on $2-\varepsilon$. Then $j'$ is the closest proxy to $\tmed=1$ (and thus selected by WM), but has social cost of $3k+\Theta(1)$, whereas $j$ has social cost of $k+\Theta(1)$.
		
		Conversely, if $j'$ is on $1+2/k$, then $j$ still minimizes the social cost, but its distance from $\tmed$ is $\Theta(k)$ larger than that of $j'$. 
	}
	
	Following \cite{cohensius2016proxy}, we adopt the distance to the true median as our goal, but also discuss the implications on social cost where relevant.
	
	\paragraph{Convergence.}
	When discussing strategic behavior, an even more fundamental question than welfare is \emph{stability}. A hierarchy for notions of stability in iterative voting is explained in \cite{meir2018strategic}. For example, under Plurality with a mild assumption on voters' behavior it is known that iterative voting always converges to a pure Nash equilibrium~\cite{meir2017iterative}. In candidacy games, which are equivalent to our setting with weightless proxies, it was shown that a pure Nash equilibrium exists~\cite{sabato2017real}, but there are no results regarding convergence, or regarding the more general model where proxies have weights.
	
	In this section and the next one, we therefore study both the conditions under which iterative voting by strategic proxies is guaranteed to converge, and bounds on the distance of the final outcome from the true median of the population.
	
	\if 0
	
	\paragraph{Manipulations may improve accuracy, but this is not straightforward.}
	Recall that one of the motivations for proxy voting is to mitigate the caveats of partial participation, one of them is the inaccuracy of outcomes. While manipulations show promise in improving the accuracy of outcomes, it is important to note that there are other potential manipulations that proxies could employ, and they may continue to repeatedly adjust their positions.
	
	In this section, we study the effect of ongoing strategic behavior of proxies on the WM outcome. We focus on:
	
	\begin{itemize}
		\item Achieving stability. This is a common objective in the context of ongoing strategic behavior.
		\item Improving the accuracy of the outcome. In~\citep{cohensius2016proxy}, the authors show that proxy voting via WM can only reduce the distance from the true median over partial participation. We further study the impact of strategic behavior on the distance of the outcome from the true median.
		\item Reducing the social cost. In social choice theory, The social cost is a measure of the dissatisfaction of agents from the outcome of the decision making process. In Hotelling-Downs-like settings~\citep{downs1957economic,hotelling1929stability} such as SPGs, where voters are positioned in some metric space, it is common to measure the social cost by the sum of distances from voters' true preferences to the outcome.
	\end{itemize}
	\fi
	%%%%%%%%%%%%%%%%%%%%%%%%%%%%%%%%%%%%%%%%%%%%%%%%%%%%%%%%%%%%%%%%%%%%%%%%
	
	\subsection{Dynamics and Convergence}
	
	\paragraph{Policies.}
	A \emph{policy} for proxy $j \in M$ is a function that maps a state to a strategy. Formally, let $\cS=\R^m$ be the set of all possible states for the proxies, then, a policy for $j$ is a function $\pi_{j}: \cS \to \R$.
	
	\paragraph{Better-responses.}
	For every $j \in M$ and every state $s$, we say that the position $s_{j}'$ is a \emph{better-response} to $s$ if $j$ strictly prefers the outcome of $\parenthesis{s_{-j},s_{j}'}$ to the outcome of $s$. We denote the set of better-responses of $j$ to $s$ by $\better{s}{j}$. We say that a policy is a \emph{better-response policy} for $j$ if for every $s$, the strategy selected by the policy is a better response to $s$, that is $\pi_j\parenthesis{s} \in \better{s}{j}$. A better-response policy is said to be \emph{best-response policy}, if the outcome of the selected strategy is most preferred by the proxy within their better-response set. Note that a best-response policy may not exist.
	
	\paragraph{Truth-oriented.}
	A proxy $j$ is \emph{truth-oriented} if their policy selects their true position whenever it is in their better-response set, and is weakly better than any other strategy. Formally, for every $s$, if $p_{j} \in \better{s}{j}$ and $j$ prefers $\wm{s_{-j},p_j}$ to $\wm{s_{-j},s_j'}$ for every $s_j' \in \better{s}{j}$, then $\pi_{j}\parenthesis{s}=p_{j}$. 
	Truth-orientation is closely related to \emph{truth-bias} proposed by ~\citet{meir2010convergence}. Truth-biased agents would resort to truth if it is weakly better than \emph{any} other strategy, in particular when the better-response set is empty. Truth-orientation is a weaker requirement, as truth is only compared to better-responses.
	
	\paragraph{Dynamics.}
	A \emph{dynamics} $\dyn{s} = \parenthesis{s^t}_{t=0}^{\infty}$ is a (possibly infinite) series of states, where $s^t$ is the state after step $t$. We assume that the initial state is truthful, i.e., $s^0 = p|_M$.
	
	Then, for every $t > 0$  there is a single proxy $j=\pmov{t} \in M$ changing position from $s^{t-1}_j$ to $s^t_{j}$ according to their policy $\pi_j$. Thus
	$$s^t = \parenthesis{s^{t-1}_{-j},s^t_j} = \parenthesis{s^{t-1}_{-j}, \pi_j\parenthesis{s^{t-1}}}.$$
	We do not assume any particular order over proxies' turns, except that there is no starvation. That is,  every proxy eventually gets to play again, an infinite number of times.
	
	Recall that the winner in state $s$ is denoted by $\pwin{s}$. We denote by $\pwin{t}$ and $\swm{t}$ the winner at time $t$ and their position, respectively.  Thus, $\pwin{t} = \pwin{s^t}$ and $\swm{t}=s^t_{j^*(t)}=\wm{s^t}$. We further denote by $\med{t} := \statemed{s^t}$ the median at step $t$.
	
	We further denote by  by $j^*:=j^*(s^0)$ the winner of the initial state, thus $p_{j^*}=\wm{p}=\twm$.
	
	At a given state $s$, we denote by $\Delta(s):=|\statemed{s}-\wm{s}|$ the distance between the unweighted median and the weighted median (the winner). We also denote $\Delta^t:=\Delta(s^t)$ and $\Delta:=\Delta(s^0)$. 
	
	The standard setting for the study of on-going dynamics in voting is Iterative Voting~\citep{meir2017iterative}. However, since our model involves an infinite action set, the terminology and results cannot be applied in a straightforward way. We address it when relevant. Instead, we say that a dynamics $\dyn{s}$ \emph{converges} if it has a limit.
	
	A state $s$ is a \emph{pure Nash equilibrium} (PNE) if for every $j \in M$ it holds that $\better{s}{j} = \varnothing$, that is, no proxy has a better-response to $s$.
	
	We start our analysis with bounding the distance from the true median that the outcome can converge to.  For the rest of this section, we show that for every step in a better-response dynamics from truth, the current median and outcome are bounded in a neighborhood of $\tmed$ with radius $\Delta$. 
	
	\begin{theorem}
		\label{thm:bounded_proxies}
		Assume all proxies are truth-oriented. For every step $t\ge 0$ in a better-response dynamics $\dyn{s}$ by proxy~$j$, if $j$'s peak is left of $\tmed$ then $s^t_{j} \le \tmed+\Delta$; and if their peak is right of $\tmed$, then $\tmed-\Delta \le s^t_{j}$.
	\end{theorem}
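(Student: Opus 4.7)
The plan is to proceed by strong induction on $t$, reinforcing the claim with the auxiliary invariant that $\wm{s^t} \in [\tmed - \Delta, \tmed + \Delta]$ at every step. The base case $t = 0$ is immediate: $s^0 = p$ gives $s^0_k = p_k$, which trivially satisfies the one-sided bound for each proxy, and $\wm{s^0} = \twm$ sits at distance exactly $\Delta$ from $\tmed$ by definition of $\Delta$.

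For the inductive step, let $j$ denote the proxy moving at step $t$; for every $k \neq j$ the declared position is unchanged, so the bound carries over from the IH. Assume WLOG $p_j < \tmed$ (the case $p_j > \tmed$ is symmetric, and $p_j = \tmed$ rules out any manipulation by Theorem~\ref{thm:manipulable_proxy_characterization}, forcing $s^t_j = s^{t-1}_j$). The main tool is monotonicity of the WM in $j$'s declared position: by Lemma~\ref{lm:nearest_proxy_to_median_wins}, $\wm{s_{-j}, x}$ equals the proxy closest to $\statemed{s_{-j}, x, p|_N}$, and both the unweighted median (as a function of $x$) and the ``closest proxy to a given point'' (as a function of that point) are weakly non-decreasing; hence the composition is weakly non-decreasing in $x$.

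If $s^t_j \le s^{t-1}_j$, the IH applied to $s^{t-1}_j$ already gives $s^t_j \le \tmed + \Delta$. Otherwise $j$ moves strictly right. Since $s^t_j$ is a better-response and $p_j < \tmed$, the new outcome must be strictly closer to $p_j$; combined with $\wm{s^t} \ge \wm{s^{t-1}}$ from monotonicity, this forces $\wm{s^{t-1}} < p_j$ and $\wm{s^t} < 2p_j - \wm{s^{t-1}}$. Invoking the auxiliary IH $\wm{s^{t-1}} \ge \tmed - \Delta$ then gives $\wm{s^t} < 2p_j - \tmed + \Delta < \tmed + \Delta$, since $p_j < \tmed$. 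In particular, if $j$ were the new winner, then $\wm{s^t} = s^t_j$ and the bound on $\wm{s^t}$ immediately yields $s^t_j < \tmed + \Delta$, as desired. Otherwise, $\wm{s^t} = s^{t-1}_k$ for some $k \ne j$, and by monotonicity the set of $s_j'$ yielding this same outcome $s^{t-1}_k$ forms an interval.

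The main obstacle will be the non-winner case: showing that whenever a better-response $s^t_j > \tmed + \Delta$ exists, truth-orientation prevents the policy from selecting it. The idea is that the interval of better-responses producing outcome $s^{t-1}_k$ intersects $[s^{t-1}_j, \tmed + \Delta]$ — since at the threshold where the outcome first becomes $s^{t-1}_k$ (a single rank-shift of the median), $s_j$ need not exceed $\tmed + \Delta$ — so among the indifferent better-responses $p_j$ is weakly optimal and truth-orientation selects $p_j \le \tmed + \Delta$. Finally, the auxiliary invariant $\wm{s^t} \in [\tmed - \Delta, \tmed + \Delta]$ for step $t$ is verified by the bound on $\wm{s^t}$ derived above in the winner subcase, and by the IH on proxy $k$ together with the lower bound $\wm{s^t} > \wm{s^{t-1}} \ge \tmed - \Delta$ in the non-winner subcase.
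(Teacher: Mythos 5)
There are two genuine gaps here. The first is the monotonicity claim at the core of your induction: $\wm{s_{-j},x}$ is the position closest to $\statemed{s_{-j},x,p|_N}$ \emph{among a set of candidates that itself contains $x$}, so it is not the composition of two monotone maps and is not weakly non-decreasing in $x$. Concretely, with one other proxy at $0$ and enough followers pinning the unweighted median at $5$, moving $x$ from $9$ to $11$ changes the winner from $9$ to $0$ --- a strict decrease. Since the inequality $\wm{s^t}\ge\wm{s^{t-1}}$ drives both your winner subcase and the auxiliary invariant, the induction does not close as written. (Separately, the reflection bound $\wm{s^t}<2p_j-\wm{s^{t-1}}$ tacitly assumes proxies have \emph{symmetric} single-peaked preferences, which the model grants only to followers.)

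The second gap is the non-winner case, which you correctly flag as the main obstacle but do not resolve, and it is exactly where the content of the theorem lives. Truth-orientation only applies when $p_j$ itself is a better-response \emph{and} weakly best among better-responses; showing that the interval of declarations producing outcome $s^{t-1}_k$ meets $[s^{t-1}_j,\tmed+\Delta]$ does not show that $p_j$ lies in that interval or achieves that outcome, so you cannot conclude the policy refrains from the violating move. The paper's route is different: it first proves an anchoring lemma stating that, as long as no violation has occurred, the \emph{initial winner} $j^*$ (whose peak is at distance exactly $\Delta$ from $\tmed$) never crosses to the far side of its own peak --- for $j^*$, declaring $p_{j^*}$ genuinely is a weakly best better-response (it either wins or shifts the median just as well), so truth-orientation pins $s^t_{j^*}$ inside $[\tmed-\Delta,\tmed+\Delta]$. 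Lemma~\ref{lm:nearest_proxy_to_median_wins} then forces the current winner inside that interval, and any better-response by another proxy must land between its peak and the current winner, hence inside the interval. Your argument needs an anchor of this kind; without it the induction hypothesis gives you nothing to pull the new declaration back toward $\tmed$.
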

	\begin{remark}
		We point out that if proxies are weightless, as in the model by \citet{sabato2017real}, then the theorem trivially holds. This is since $\statemed{s}=\statemed{p}$ at every state, and thus $\Delta^t$ can only become smaller at every step, as the current proxy moves closer to $\statemed{p}$ to become the new winner.
	\end{remark}

	Consider a truthful state where the peak of the initial winner $p_{j^*}$ is left of the median. For a better-response dynamics from this truthful state, we prove the following lemma:
	
	\begin{lemma}
		\label{lm:truthful_winner_cant_drift}
		Assume all proxies are truth-oriented. If for every $0 \le t' \le t$, and for all $j\in M$ s.t. $p_j\geq \statemed{p}$ it holds that $s^{t'}_j\geq med-\Delta=p_{j^*}$, then for the truthful winner $j^*$ it holds that $p_{j^*} \le s^t_{j^*}$
	\end{lemma}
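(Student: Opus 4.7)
I plan to prove this by induction on $t$. The base case $t=0$ is immediate since $s^0=p|_M$. For the inductive step, if the moving proxy $\pmov{t}\neq j^*$ then $s^t_{j^*}=s^{t-1}_{j^*}\geq p_{j^*}$ by the inductive hypothesis; so the only nontrivial case is $\pmov{t}=j^*$, in which case $j^*$ plays a better-response. I will show that no $y<p_{j^*}$ lies in $j^*$'s better-response set $\better{s^{t-1}}{j^*}$, so the policy's selection must satisfy $s^t_{j^*}\geq p_{j^*}$.

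First I would show $\statemed{s^{t-1}}\geq p_{j^*}$ and hence, via Lemma~\ref{lm:nearest_proxy_to_median_wins}, that the current outcome $w^{t-1}:=\wm{s^{t-1}}$ satisfies $w^{t-1}\geq p_{j^*}$. The lemma's hypothesis gives $s^{t-1}_j\geq p_{j^*}$ for every right-peak proxy, and the inductive hypothesis gives $s^{t-1}_{j^*}\geq p_{j^*}$; followers are fixed throughout the dynamics. In the truthful state, strictly more than half of all voters sit at $\geq p_{j^*}$, since the majority already lies at $\geq\tmed>p_{j^*}$ and $j^*$ itself contributes one more at $p_{j^*}$. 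The hypothesis only allows right-peak proxies and $j^*$ to remain at $\geq p_{j^*}$, so the voter count at $\geq p_{j^*}$ in $s^{t-1}$ does not drop below the truthful count, preserving the majority; hence $\statemed{s^{t-1}}\geq p_{j^*}$. Since $p_{j^*}$ is a proxy position in $s^{t-1}$ (attained directly by $j^*$ if $s^{t-1}_{j^*}=p_{j^*}$, or dominated in distance by some proxy at $\geq p_{j^*}$), any proxy strictly below $p_{j^*}$ is strictly farther from $\statemed{s^{t-1}}$ than $p_{j^*}$; Lemma~\ref{lm:nearest_proxy_to_median_wins} then gives $w^{t-1}\geq p_{j^*}$.

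For the second and main step, I argue that no $y<p_{j^*}$ is a better-response. Moving $j^*$ to such $y$ weakly decreases the unweighted median by monotonicity of $\statemed{\cdot}$, and by Lemma~\ref{lm:nearest_proxy_to_median_wins} the new winner is the closest proxy to this possibly-reduced median. Because $j^*$ now sits strictly below $p_{j^*}$ while the median, by a parallel voter count, either remains $\geq p_{j^*}$ or drops in a controlled way, $j^*$ cannot overtake any proxy in $[p_{j^*},\infty)$ as the closest; so the winner stays some proxy $k\neq j^*$ at an unchanged position $s^{t-1}_k$. A short case analysis distinguishing whether $k$ equals the previous winner or shifts leftward as the median drops then shows that the new outcome is either exactly $w^{t-1}$ or lies on the same side of $p_{j^*}$ as $w^{t-1}$ at distance at least $w^{t-1}-p_{j^*}$, precluding a strict improvement.

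The main obstacle is this second step: the weighted median $\wm{\cdot}$ is not globally monotonic in $j^*$'s declared position because follower delegations shift when $j^*$ moves, so a direct monotonicity-of-$\wm{\cdot}$ argument fails. The proof must therefore combine the cleaner monotonicity of $\statemed{\cdot}$, the closest-proxy characterization from Lemma~\ref{lm:nearest_proxy_to_median_wins}, and the voter counts enforced by the lemma's hypothesis, to rule out the new winner jumping into the open interval $(2p_{j^*}-w^{t-1},w^{t-1})$ of positions that would strictly improve $j^*$'s outcome.
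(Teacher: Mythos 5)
Your induction scaffolding (base case, reduction to the step where $j^*$ itself moves) and your first step (the counting argument giving $\statemed{s^{t-1}}\ge p_{j^*}$) match the paper's reasoning. The gap is in your main step: the claim that no $y<p_{j^*}$ lies in $\better{s^{t-1}}{j^*}$ is false, and in particular the assertion that ``$j^*$ cannot overtake any proxy in $[p_{j^*},\infty)$ as the closest'' does not follow from what you use. Nothing in the hypotheses prevents every proxy other than $j^*$ from sitting farther from the (new) median than $y$ does, in which case Lemma~\ref{lm:nearest_proxy_to_median_wins} makes $j^*$ the winner at $y<p_{j^*}$. Moreover, proxies are only assumed single-peaked, \emph{not} symmetric, so their preference between an outcome left of the peak and one right of the peak is unconstrained: $j^*$ may strictly prefer the outcome $y<p_{j^*}$ to the previous outcome lying to the right of its peak, making $y$ a genuine better-response. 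For illustration, take followers at $0,1,2$, $j^*$ with peak $-2$ currently reporting $6$, and a second proxy with peak $50$ reporting $10$: the winner is $j^*$ at $6$, and deviating to $-3$ makes $j^*$ win at $-3$, which it may strictly prefer to $6$. Your argument, which relies only on monotonicity of $\statemed{\cdot}$ and the voter counts, cannot exclude this.

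A symptom of the problem is that your proof never invokes truth-orientation, which is an explicit hypothesis of the lemma and is exactly what the paper's proof leans on. The paper does not show that moves left of the peak are never better-responses; it shows that whenever some $s^{t_0}_{j^*}<p_{j^*}$ is a better-response, the peak $p_{j^*}$ is also a better-response and is weakly preferred to it: since both $p_{j^*}$ and $s^{t_0}_{j^*}$ lie weakly left of the median $\med{t_0}\ge p_{j^*}$, reporting $p_{j^*}$ induces the same unweighted median, and then either $j^*$ wins at $p_{j^*}$ (its peak, hence the optimal outcome) or the winner and outcome are the same as under $s^{t_0}_{j^*}$. Truth-orientation then forces $j^*$ to report $p_{j^*}$ rather than any $y<p_{j^*}$, yielding the contradiction. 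To repair your proof, replace the ``empty better-response set'' claim in the inductive step with this comparison between $y$ and $p_{j^*}$.
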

	\begin{proof}
		Assume false, and let $0 < t_0 \le t$ be the first step that the truthful winner $j^*$ moves to the left of their peak, that is, $s^{t_0}_{j^*} < p_{j^*} \le s^{t_0-1}_{j^*}$. In particular this means that $j^*$ is the moving proxy at $t_0-1$.
		Since no proxy with peak right of $\tmed$ reported a position left of $\tmed-\Delta$ prior to $t$, it follows that the median at every step prior to $t$ is right of $\tmed-\Delta$. This is in particular true at $t_0\le t$, that is, $\tmed-\Delta \le \med{t_0}$. It follows that $s^{t_0}_{j^*} < p_{j^*} = \tmed-\Delta \le \med{t_0}$.
		
		We argue that this contradicts the assumption that proxies are truth-oriented. One of the following must hold. Either $s^{t_0}_{j^*}$ is the winning position at $t_0$, in which case $p_{j^*}$ is closer to the current median and can therefore win. Since the peak is the optimal outcome for $j^*$, it is in their better-response set, the outcome is weakly better than the outcome of every other strategy in $\better{s^{t_0}}{j^*}$. Otherwise, since $s^{t_0}_{j^*}$ is a better-response to $s^{t_0-1}$, it must be that by reporting $s^{t_0}_{j^*}$ the position of the median changed, such that $\pwin{t_0}\ne j^*$ is closer to $\med{t_0}$ than $\med{t_0-1}$. Again since $t_0,t_0-1\le t$ we get that $p_{j^*} \le \med{t_0},\med{t_0-1}$. Thus, reporting $p_{j^*}$ would have the same effect on the position of the median as $s^{t_0}_{j^*}$. Therefore, $p_{j^*}$ is weakly better than any other better-response. Figure~\ref{fig:truthful_winner_cant_drift} demonstrates the possible scenarios.\qed
		
		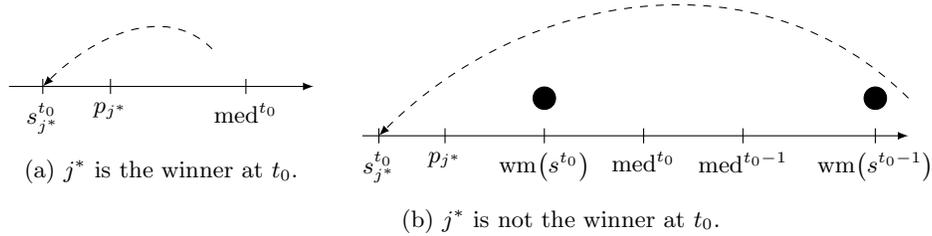
\begin{figure}[ht]
			\vspace{-2em}
			\begin{subfigure}{0.35\textwidth}
				\centering
				\begin{tikzpicture}[x=0.9cm, y=1cm, z=1mm]
					\coordinate (s) at (1,0);
					\coordinate (p) at (2,0);
					\coordinate (med) at (4,0);
					
					%axes, ticks and guides
					\draw[-latex] (0.5,0) -- (5,0);
					\draw (s)++(0,0.1) -- ++(0,-0.2) node[below] {$s^{t_0}_{j^*}$};
					\draw (p)++(0,0.1) -- ++(0,-0.2) node[below] {$p_{j^*}$};
					\draw (med)++(0,0.1) -- ++(0,-0.2) node[below] {$\med{t_0}$};
					
					\draw[dashed,-latex] (med)++(-0.5,0.5) to[out=135,in=45] (s);
				\end{tikzpicture}
				\caption{$j^*$ is the winner at $t_0$.}
			\end{subfigure}
			\quad
			\begin{subfigure}{0.45\textwidth}
				\begin{tikzpicture}[x=0.88cm, y=1cm, z=1mm]
					\coordinate (s) at (1,0);
					\coordinate (p) at (2,0);
					\coordinate (wa) at (3.5,0);
					\coordinate (meda) at (5,0);
					\coordinate (medb) at (6.5,0);
					\coordinate (wb) at (8.5,0);
					
					%axes, ticks and guides
					\draw[-latex] (0.75,0) -- (9,0);
					\draw (s)++(0,0.1) -- ++(0,-0.2) node[below] {$s^{t_0}_{j^*}$};
					\draw (p)++(0,0.1) -- ++(0,-0.2) node[below] {$p_{j^*}$};
					\draw (wa)++(0,0.1) -- ++(0,-0.2) node[below] {$\swin{t_0}$};
					\draw (meda)++(0,0.1) -- ++(0,-0.2) node[below] {$\med{t_0}$};
					\draw (medb)++(0,0.1) -- ++(0,-0.2) node[below] {$\med{t_0-1}$};
					\draw (wb)++(0,0.1) -- ++(0,-0.2) node[below] {$\swin{t_0-1}$};
					
					\draw[dashed,-latex] (wb)++(0.5,0.5) to[out=135,in=45] (s);
					
					\draw (wa)+(0,0.5) node[circle,draw=black,fill,inner sep=3pt] {};
					\draw (wb)+(0,0.5) node[circle,draw=black,fill,inner sep=3pt] {};
				\end{tikzpicture}
				\caption{$j^*$ is not the winner at $t_0$.}
			\end{subfigure}
			\caption{Possible states at $t_0$ after $t^*$ moves.}
			\label{fig:truthful_winner_cant_drift}
		\end{figure}
		\vspace{-1em}
	\end{proof}
	
	Note that by symmetry the same hold for the case where the truthful winner's peak is right of the median. We turn to prove Theorem~\ref{thm:bounded_proxies}.
	
	\begin{proof}
		Assume that there was no violation up to step $t\ge 0$. In particular, this imply that $\tmed-\Delta \le \med{t} \le \tmed+\Delta$.
		
		Consider a proxy $j$, and let $s_{j}'$ be a possible strategy for $j$ such that \\$s_{j}' \notin \brackets{\tmed-\Delta, \tmed+\Delta}$. If $s_{j}'$ is on the same side of the median as $p_j$, then this is not a violation. Otherwise, $s_{j}'$ is a better-response iff it is between $p_j$ and $\swin{t}$, the winning position at $t$. This is because otherwise, $s_{j}'$ would either have no effect on the outcome, or the outcome would be further than $\swin{t}$ to their peak. Note that this also implies that $\swin{t} \notin \brackets{\tmed-\Delta,\tmed+\Delta}$. By lemma~\ref{lm:truthful_winner_cant_drift}, we have that $s^t_{j^*} \in \brackets{\tmed-\Delta,\tmed+\Delta}$. Thus, $\swin{t} \notin \brackets{\tmed-\Delta,\tmed+\Delta}$ only if $\swin{t}$ and $s^t_{j^*}$ are not on the same side of the median. Moreover, it must be that $\med{t}$ is between $\tmed$ and the position of the current winner $\swin{t}$. However, if $\med{t} \ne \tmed$, then there is a proxy with a reported position in $s^t$ that is between $\med{t}$ and $\swin{t}$ such that their peak is on the other side of $\tmed$ than $\med{t}$. This contradicts Lemma~\ref{lm:nearest_proxy_to_median_wins}, thus, all violations are not in the better-response set of the proxies, and there is no violation at step $t$.\qed
		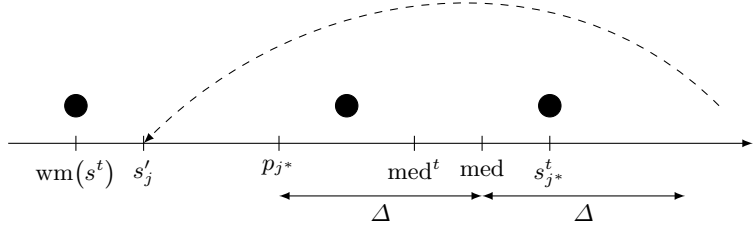
\begin{figure}[ht!]
			\centering
			\begin{tikzpicture}[x=0.9cm, y=1cm, z=1mm]
				\coordinate (s) at (1,0);
				\coordinate (st) at (2,0);
				\coordinate (p) at (4,0);
				\coordinate (d) at (5,0);
				\coordinate (medt) at (6,0);
				\coordinate (med) at (7,0);
				\coordinate (ow) at (8,0);
				\coordinate (dt) at (10,0);
				
				%axes, ticks and guides
				\draw[-latex] (0,0) -- (dt)--++(1,0);
				\draw (s)++(0,0.1) -- ++(0,-0.2) node[below] {$\swin{t}$};
				\draw (st)++(0,0.1) -- ++(0,-0.2) node[below] {$s_{j}'$};
				\draw (p)++(0,0.1) -- ++(0,-0.2) node[below] {$p_{j^*}$};
				\draw (medt)++(0,0.1) -- ++(0,-0.2) node[below] {$\med{t}$};
				\draw (med)++(0,0.1) -- ++(0,-0.2) node[below] {$\tmed$};
				\draw (ow)++(0,0.1) -- ++(0,-0.2) node[below] {$s^t_{j^*}$};
				
				\draw[dashed,-latex] (dt)++(0.5,0.5) to[out=135,in=45] (st);
				
				\draw (s)+(0,0.5) node[circle,draw=black,fill,inner sep=3pt] {};
				\draw (d)+(0,0.5) node[circle,draw=black,fill,inner sep=3pt] {};
				\draw (ow)+(0,0.5) node[circle,draw=black,fill,inner sep=3pt] {};
				
				\draw[latex-latex] (p)++(0,-0.7) -- node[below] {$\Delta$} ++(3,0);
				\draw[latex-latex] (med)++(0,-0.7) -- node[below] {$\Delta$} ++(3,0);
			\end{tikzpicture}
			\caption{Proof of Theorem~\ref{thm:bounded_proxies}. For a violation to be a better-response, this must be the state at $t$, in contradiction to Lemma~\ref{lm:nearest_proxy_to_median_wins}.}
		\end{figure}
	\end{proof}
	
	\begin{corollary}
		For every state $s^t$ in a better-response dynamics $\dyn{s}$ with truth-oriented proxies, both the median and the outcome of $s^t$ are in the interval $\brackets{\tmed-\Delta,\tmed+\Delta}$.
	\end{corollary}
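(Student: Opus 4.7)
My plan is to prove the two bounds---on $\med{t}$ and on $\swm{t}$---separately, both as consequences of Theorem~\ref{thm:bounded_proxies} combined with Lemma~\ref{lm:nearest_proxy_to_median_wins}.

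For the median, I would use a direct counting argument. Followers stay at their truthful positions throughout the dynamics, and by Theorem~\ref{thm:bounded_proxies} every left-peak proxy $j$ (with $p_j \le \tmed$) reports $s^t_j \le \tmed+\Delta$. Hence every voter whose truthful position is at most $\tmed$---follower or left-peak proxy---still has reported position at most $\tmed+\Delta$ at $s^t$. Since $\tmed$ is the true median, at least half of all voters belong to this set, which forces $\med{t} \le \tmed+\Delta$. The symmetric count against the right-peak bound of Theorem~\ref{thm:bounded_proxies} yields $\med{t} \ge \tmed-\Delta$.

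For the outcome, I would apply Lemma~\ref{lm:nearest_proxy_to_median_wins}: the winner is the proxy whose reported position is closest to $\med{t}$. Combining Lemma~\ref{lm:truthful_winner_cant_drift} with Theorem~\ref{thm:bounded_proxies} gives $s^t_{j^*} \in [\tmed-\Delta,\tmed+\Delta]$, so the target interval already contains a candidate proxy. Suppose for contradiction $\swm{t} > \tmed+\Delta$ (the case $\swm{t}<\tmed-\Delta$ is symmetric). For the winner to be strictly closer to $\med{t}$ than $s^t_{j^*}$, the median must lie strictly between $s^t_{j^*}$ and $\swm{t}$, and averaging the Theorem~\ref{thm:bounded_proxies} bounds on $s^t_{j^*}$ and $\swm{t}$ then forces $\med{t}>\tmed$. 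A rightward shift of the median from $\tmed$ requires some left-peak proxy $k$ to have reported $s^t_k > \tmed$; analyzing the median voter shows one can locate such a $k$ with $s^t_k$ in the interval $[\med{t},\swm{t}]$, and this proxy would be strictly closer to $\med{t}$ than $\swm{t}$, contradicting Lemma~\ref{lm:nearest_proxy_to_median_wins}.

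The main obstacle is the second bound: unlike the median bound, which is a straightforward counting exercise, the winner bound is not an immediate consequence of the position bounds from Theorem~\ref{thm:bounded_proxies}. It essentially reuses the delicate ``opposite-peak proxy between the median and the winner'' argument that already concludes the proof of Theorem~\ref{thm:bounded_proxies}.
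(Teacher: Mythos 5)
Your proposal is correct and follows essentially the same route as the paper: the corollary is extracted directly from the proof of Theorem~\ref{thm:bounded_proxies}, where the median bound is exactly the (implicit) counting consequence of the position bounds you spell out, and the winner bound is obtained by the same ``opposite-peak proxy between $\med{t}$ and $\swin{t}$ contradicts Lemma~\ref{lm:nearest_proxy_to_median_wins}'' argument you invoke. Your only addition is making the median-counting step explicit, which the paper leaves as an assertion.
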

	
	The bound on the outcome shows that strategic behavior in proxy voting can reduce the distance between the outcome and the true median. However, it does not convergence to a stable state (equilibrium), or even a reduced social cost. In what follows, we discuss conditions for both.
	
	\subsection{Monotone Policies}
	
	\paragraph{Monotonicity justification}
	Consider the better-response set of some proxy $j$ with peak $p_j < \statemed{s}$ at state $s$. While it is possible that $s_j$ is on the same side of $\statemed{s}$ and there are better responses on both sides, the following must hold:
	\begin{itemize}
		\item There is at least one better response $s'_j\leq \statemed{s}$;
		\item $j$ weakly prefers any better-response $s'_j\leq \statemed{s}$ to any $s''_j>\statemed{s}$, due to single-peakedness.
	\end{itemize}
	Therefore, it is reasonable to assume that proxies 
	restrict their policies so as to select a position that is on the same side of the median as their true position.\footnote{This assumption is somewhat similar to a `no overbidding' assumption in auctions.} In the following discussion, we restrict policies to ones that preserves the integrity of proxies positions with respect to the median.
	
	\paragraph{Monotonicity}
	Formally, we say that a better-response dynamics $\dyn{s}$ is \emph{monotone} if for every $j \in M$, we have for any $j\in M$ s.t. $p_{j} \le \tmed$, and any step $t$, that $\pi_{j}\parenthesis{s^t} \le \tmed$ (and likewise for $p_j \ge \tmed$. Note that for every state $s^t$ of a monotone better-response $\dyn{s}$, the median of $s^t$ is $\tmed$.
	
	\begin{observation}
		Under monotone dynamics, $\med{t}=\tmed$ for all $t$.
	\end{observation}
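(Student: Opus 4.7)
The plan is to reduce the claim to a simple counting argument: under monotone dynamics, no agent crosses $\tmed$, so the number of agents on each side of $\tmed$ is preserved and $\tmed$ remains a valid unweighted median of the profile $(s^t, p|_N)$.

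Concretely, I would first fix $t$ and describe the profile $(s^t, p|_N)$ entry by entry. Followers' positions are pinned to $p|_N$, so each follower contributes to the same side of $\tmed$ at time $t$ as at time $0$. For each proxy $j$, the definition of monotone dynamics gives $p_j \le \tmed \Rightarrow s^t_j \le \tmed$ and $p_j \ge \tmed \Rightarrow s^t_j \ge \tmed$. Therefore, setting $L^t := |\{i \in N \cup M : x^t_i \le \tmed\}|$ and $R^t := |\{i \in N \cup M : x^t_i \ge \tmed\}|$ (with $x^t_i = s^t_i$ for proxies and $x^t_i = p_i$ for followers), we obtain $L^t \ge L^0$ and $R^t \ge R^0$. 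Since $\tmed = \statemed{p}$ is the unweighted median in the truthful profile, both $L^0$ and $R^0$ are at least $\lceil (m+n)/2 \rceil$, and this inequality transfers to $L^t$ and $R^t$.

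To conclude that $\med{t} = \tmed$, I would check that at least one element of $(s^t, p|_N)$ is located at $\tmed$ at every step: a follower at $\tmed$ works automatically, and any proxy $k$ with $p_k = \tmed$ satisfies both monotonicity constraints simultaneously, so $s^t_k = \tmed$ is forced. With such an element in hand, the bounds on $L^t$ and $R^t$ directly yield the defining inequalities of $\statemed$ (at most $W/2$ of the remaining weight lies on each strict side), so $\tmed$ is a valid unweighted median at $s^t$.

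The only mildly delicate step is the tie-breaking in the definition of $\statemed$: in principle, multiple elements could satisfy the median inequalities. This is the one place where one has to be careful, but it is not a real obstacle because any valid median of the resulting unweighted profile must equal the value $\tmed$ (the preserved counts pin down the middle order statistic), so the lex rule cannot select a position other than $\tmed$.
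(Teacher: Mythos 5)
Your argument is correct and is essentially the paper's own justification, which consists of the single sentence that the same set of voters remains on each side of $\tmed$ under monotone dynamics; you simply spell out the counting explicitly. Your added care about an element of the profile sitting exactly at $\tmed$ (forced for a proxy with $p_k=\tmed$ by both monotonicity constraints) and about tie-breaking goes slightly beyond what the paper bothers to check, but it is the same approach.
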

	
	This is since the same set of voters (followers and proxies) remain on each side of the $\tmed$.
	
	\subsection{Narrowing in on the Median}
	Our goal in this section is to prove that that any monotone better-response dynamics converges to the true median. The problem is that this may not hold at every step, which requires some extra work.

	The following Lemma shows that any better-response in a monotone better-response dynamics where the winning proxy is not the moving proxy strictly decreases the distance to the median.
	\begin{lemma}
		\label{lm:not_winner_manipulation_closer_to_median}
		Let $\dyn{s}$ be a monotone better-response dynamics. Then, for every $t \ge 0$ if $\pmov{t+1} \ne \pwin{t}$, then $\Delta^{t+1} < \Delta^t$.
	\end{lemma}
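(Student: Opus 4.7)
My plan is to use the Monotonicity observation to pin the unweighted median at $\tmed$ throughout the dynamics; combined with Lemma~\ref{lm:nearest_proxy_to_median_wins}, this identifies the winner at each state with the proxy closest to $\tmed$, so $\Delta^{t'}$ equals that minimum distance at every step. Between $s^t$ and $s^{t+1}$ only $j := \pmov{t+1}$ moves, so every other proxy retains its position and its distance to $\tmed$; in particular $\pwin{t}$ is still a candidate at distance exactly $\Delta^t$ in $s^{t+1}$.

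I would then argue that $\pwin{t+1}=j$. The alternative $\pwin{t+1}=\pwin{t}$ gives $\swm{t+1}=\swm{t}$, contradicting that $j$'s move is a better response (which requires the winning position to strictly improve for $j$, and hence to change). Any third proxy $k \ne j, \pwin{t}$ has the same position and the same distance to $\tmed$ as in $s^t$, where it already failed to dethrone $\pwin{t}$, so it cannot become the winner now either. Consequently the new winning position must be $s^{t+1}_j$, and minimality of distance to $\tmed$ (witnessed by $\pwin{t}$) gives $\Delta^{t+1}=|s^{t+1}_j-\tmed|\le \Delta^t$.

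It remains to exclude equality. Under monotonicity $j$'s new position lies on the same side of $\tmed$ as $p_j$, so if $|s^{t+1}_j-\tmed|=\Delta^t$ then $s^{t+1}_j$ is either on $\pwin{t}$'s side, forcing $s^{t+1}_j=\swm{t}$ and so no outcome change, or on the opposite side, where the tie between the two equidistant candidates is resolved consistently with the previous step and again returns $\swm{t}$; either way the better-response hypothesis is violated, so $\Delta^{t+1}<\Delta^t$.

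The main obstacle I expect is precisely the strict inequality at the tie; the non-strict bound $\Delta^{t+1}\le\Delta^t$ is an almost immediate consequence of the Monotonicity observation and Lemma~\ref{lm:nearest_proxy_to_median_wins}, but ruling out the equality case requires a careful appeal to the tie-breaking convention interacting with the restriction that $j$ stays on $p_j$'s side of $\tmed$.
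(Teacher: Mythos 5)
Your argument matches the paper's proof in all essentials: both fix the unweighted median at $\tmed$ via the monotonicity observation, combine Lemma~\ref{lm:nearest_proxy_to_median_wins} with the fact that $\pwin{t}$ does not move (so it remains a candidate at distance exactly $\Delta^t$) to get $\Delta^{t+1}\le\Delta^t$, and then rule out equality by observing that a better-response must strictly change the winning position. Your additional steps—identifying the new winner as the mover and the case analysis at the tie—are just a more explicit rendering of the paper's one-line claim that $\abs{s^{t+1}_{\pwin{t+1}}-\tmed}\ne\abs{s^{t+1}_{\pwin{t}}-\tmed}$, and both treatments lean equally on the deterministic tie-breaking convention at that point.
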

	\begin{proof}
		By Lemma~\ref{lm:nearest_proxy_to_median_wins}, for every $k \in M$ it holds that
		$$\abs{\swin{t+1}-\tmed} \le \abs{s^{t+1}_{k}-\tmed}.$$ In particular, this holds for $\pwin{t} \in M$. We get:
		\begin{equation*}
			\abs{\swin{t+1}-\tmed} \le \abs{s^{t+1}_{\pwin{t}}-\tmed}
		\end{equation*}
		
		Since $\pmov{t+1} \ne \pwin{t}$, it holds that $\smov{t+1}$ is a better-response for $s^t$, so\\ $\abs{s^{t+1}_{\pwin{t+1}}-\tmed} \ne \abs{s^{t+1}_{\pwin{t}}-\tmed}$. Hence:
		\begin{align*}
			\Delta^{t+1} = \abs{\swin{t+1}-\tmed} < \abs{\swin{t}-\tmed} = \Delta^t.~~~~~~~~~~
			\qed
		\end{align*}
	\end{proof}
	
	While Lemma~\ref{lm:not_winner_manipulation_closer_to_median} shows that moves made by proxies with reported position not at the current outcome must reduce the distance to the true median, it is possible for winning proxies to move in a way that increase the distance to the median. Figure~\ref{fig:winners_increase} describe a proxy that makes 2 consecutive steps. The first makes them the winning proxy, the next is a better-response. As they remain the winning proxy, the outcome after the second step is further from the median.
	
	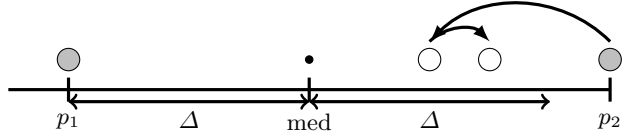
\begin{figure}[t]
		\centering
		\begin{tikzpicture}[
			line/.style={draw=black, very thick},
			proxy/.style={circle,draw=black,fill,inner sep=\proxyradius pt},
			emptyproxy/.style={circle,draw=black,fill=none,inner sep=\proxyradius pt},
			voter/.style={circle,draw=black,fill,inner sep=\voterradius pt},
			x=0.8mm, y=0.8mm, z=1mm]
			
			\draw[line, -](-50,0)--(50,0);
			
			\foreach \x/\proxyind/\t in {-4/1/p_{1},0/0/\tmed,5/1/p_{2}} {
				\draw[line](10*\x,-2)--(10*\x,2) node[below=10]{$\t$};
				\ifthenelse{\proxyind=1}
				{\draw (10*\x,5) node[circle,draw=black,fill=gray!50,inner sep=\proxyradius pt] {};}
				{\draw (10*\x,5) node[voter] {};}
			}
			
			\foreach \x in {2,3} {
				\draw (10*\x,5) node[emptyproxy] {};
			}
			
			\draw[line, <->](10*-4,-2)--(0,-2) node at (10*-2,-2)[below]{$\Delta$};
			\draw[line, <->](0,-2)--(10*4,-2) node at (10*2,-2)[below]{$\Delta$};
			
			\draw[line,-latex] (5*10,8) to[out=135,in=45] (2*10,8);
			\draw[line,-latex] (2*10,8) to[out=45,in=135] (3*10,8);

		\end{tikzpicture}
		\caption{Consecutive steps that increase the distance to the median. Gray dots indicate truthful positions of proxies, empty dots indicate positions of manipulation. Arrows indicate moves. A small full dot is the position of a (single) follower.}
		\label{fig:winners_increase}
		\vspace{-1em}
	\end{figure}
	
	\paragraph{Meta-moves.}
	We call sequences of consecutive better-responses by the same winning proxy \emph{meta-move}.
	Formally, a meta-move of length $\ell$ from $s^t$ is a subsequence of steps in a better-response dynamics $\dyn{s}$ such that:
	\begin{itemize}
		\item $\pmov{t+1} \ne\pwin{t}$ and $\pwin{t+1}=\pmov{t+1}$. That is, in state $s^t$, the proxy $\pmov{t}$ moves in a way that makes them the winner.
		\item Let $\ell > 0$ such that for every $1 \le i \le \ell$ it holds that $\pmov{t+i} = \pmov{t+1} = \pwin{t+1}$. In other words, after $\pmov{t+1}$ becomes the winning proxy at step $t+1$, they continue to make consecutive better-responses for $\ell$ steps.
	\end{itemize}
	
	The following shows that while local manipulations within a meta-move can increase the current distance to the true median (as Figure~\ref{fig:winners_increase} demonstrates), meta-moves globally decrease the distance to the true median.
	\begin{lemma}
		\label{lm:winner_manipulation_closer_to_median}
		For every meta-move of length $\ell$ from $s^t$ of a monotone better-response dynamics $\dyn{s}$, it holds that $\Delta^{t+\ell} < \Delta^t$.
	\end{lemma}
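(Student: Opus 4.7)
The plan is to combine the strict decrease at the first step of the meta-move, granted by Lemma~\ref{lm:not_winner_manipulation_closer_to_median}, with an inductive analysis of the remaining steps. The key structural observation is that a meta-move involves exactly one mover, $j := \pmov{t+1} = \pwin{t+1}$, so every other proxy, and in particular $\pwin{t}$, retains its time-$t$ position: $s^{t+i}_k = s^t_k$ for every $k \ne j$ and every $0 \le i \le \ell$. As a consequence, $|s^{t+i}_{\pwin{t}} - \tmed| = \Delta^t$ throughout, and by monotonicity the unweighted median remains $\tmed$ at every step.

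I would then show by induction on $i$ that $\pwin{t+i} = j$ and $\Delta^{t+i} < \Delta^t$ for all $1 \le i \le \ell$. The base case $i=1$ is Lemma~\ref{lm:not_winner_manipulation_closer_to_median}. For the inductive step, three ingredients suffice: (i) $j$'s better-response while being the winner at both $t+i-1$ and $t+i$ moves $s^{t+i}_j$ strictly closer to $p_j$ than $s^{t+i-1}_j$; (ii) by monotonicity, $s^{t+i}_j$ stays on $p_j$'s side of $\tmed$; and (iii) by Lemma~\ref{lm:nearest_proxy_to_median_wins} applied at $s^{t+i}$, the winner's distance to $\tmed$ cannot exceed that of the stationary $\pwin{t}$, giving $|s^{t+i}_j - \tmed| \le \Delta^t$.

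The main technical obstacle is upgrading the weak bound $\le \Delta^t$ to the required strict $< \Delta^t$. I would rule out the boundary case $|s^{t+i}_j - \tmed| = \Delta^t$ by a tie-breaking analysis: at this distance $j$ ties with the stationary $\pwin{t}$, and either (a) both proxies sit at the same point (when $p_{\pwin{t}}$ and $p_j$ lie on the same side of $\tmed$), in which case the resulting outcome equals the time-$t$ outcome $s^t_{\pwin{t}}$, which has been strictly improved upon at each preceding step of the meta-move, so the move is not a better-response from $s^{t+i-1}_j$; or (b) they lie on opposite sides of $\tmed$, in which case the same tie-breaking convention underlying Lemma~\ref{lm:not_winner_manipulation_closer_to_median} (favoring the incumbent winner) resolves the tie against $j$, so $j$'s outcome is $s^t_{\pwin{t}}$ on the opposite side of $\tmed$ from $p_j$, which is strictly worse for $j$ than the interior position $s^{t+i-1}_j$. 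Hence every admissible better-response for $j$ keeps $|s^{t+i}_j - \tmed| < \Delta^t$, closing the induction and yielding $\Delta^{t+\ell} < \Delta^t$.
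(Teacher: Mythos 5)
Your argument follows essentially the same route as the paper's: monotonicity pins the median at $\tmed$, the stationary old winner $\pwin{t}$ together with Lemma~\ref{lm:nearest_proxy_to_median_wins} yields the weak bound $\Delta^{t+i}\le\Delta^t$, and the better-response property upgrades this to a strict inequality. Your explicit induction and the tie-breaking case analysis ruling out $\Delta^{t+i}=\Delta^t$ are in fact more careful than the paper's terse assertion that $\smov{t+i}\ne\smov{t+1}$ forces $\Delta^{t+i}\ne\Delta^t$.
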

	\begin{proof}
		By Lemma~\ref{lm:nearest_proxy_to_median_wins}, monotonicity and since for every $1 \le i \le \ell$ it holds that $\pmov{t+i}=\pwin{t+1}\ne\pwin{t}$, we get that $\Delta^{t+i} \le \Delta^t$. Furthermore, since $\smov{t+1}$ is a better-response for $\pmov{t+1}$, it must be that the outcome of $s^{t+1}$ is not equal to the outcome of $s^t$. We get that for every $i$, $\smov{t+i}$ is a better-response and therefore $\smov{t+i}\ne \smov{t+1}$. Thus $\Delta^{t+i}\ne\Delta^t$. In particular, this holds for $i=\ell$.\qed
	\end{proof}
	
	Lemma~\ref{lm:not_winner_manipulation_closer_to_median} and Lemma~\ref{lm:winner_manipulation_closer_to_median} together provide a complete analysis of the better-response sets for proxies, and show that the better-response set strictly decreases after each (meta) move. However, this alone is not sufficient for convergence.
	
	\begin{example}
		\label{xmpl:dynamics_diverge}
		Recall the setting appearing in Example~\ref{xmpl:perm}. Define $\alpha_1 = \frac{1}{4}$, and for every $t \in \N$, define $\alpha_{t+1}=\frac{1}{2}\alpha_t$. We define the following policy for $j \in M$:
		\begin{equation*}
			\pi_{j}\parenthesis{s^t} = \tmed - sign\parenthesis{\tmed - p_{j}}\parenthesis{\Delta^t - \alpha_t}
		\end{equation*}
		
		For every $t \in \N$ we get that
		\begin{align*}
			\Delta^{t+1}&=\abs{\swin{t+1} - \tmed}\\
			&= \abs{\tmed - sign\parenthesis{\tmed - p_{\pwin{t+1}}}\parenthesis{\Delta^t - \alpha_t} - \tmed}\\
			&= \abs{-sign\parenthesis{\tmed - p_{\pwin{t+1}}}\parenthesis{\Delta^t - \alpha_t}} = \Delta^t - \alpha_t
		\end{align*}
		As $\alpha_t = \frac{1}{2}\alpha_{t-1}$, we get $\Delta^{t+1} = \Delta^1 - \sum_{i=0}^{t-2} \frac{1}{2^i}\alpha_1 = \Delta^1 - \alpha_1\sum_{i=0}^{t-2} \frac{1}{2^i}$. As $t \to \infty$, we get that the distance to the median converges to $\Delta^1-2\alpha_1 = \Delta^1-2\frac{1}{4}\Delta^1 = \frac{1}{2}\Delta^1$, and the outcome oscillates between $-\frac{1}{2}$ and $\frac{1}{2}$. Thus the best-response dynamics diverges. Figure~\ref{fig:diverge_example} shows a schematic of this dynamic.
		
		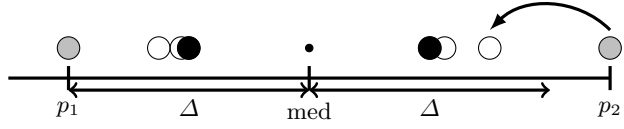
\begin{figure}[t]
			\centering
			\begin{tikzpicture}[
				line/.style={draw=black, very thick},
				proxy/.style={circle,draw=black,fill,inner sep=\proxyradius pt},
				emptyproxy/.style={circle,draw=black,fill=none,inner sep=\proxyradius pt},
				voter/.style={circle,draw=black,fill,inner sep=\voterradius pt},
				x=0.8mm, y=0.8mm, z=1mm]
				
				\draw[line, -](-50,0)--(50,0);
				
				\foreach \x/\proxyind/\t in {-4/1/p_{1},0/0/\tmed,5/1/p_{2}} {
					\draw[line](10*\x,-2)--(10*\x,2) node[below=10]{$\t$};
					\ifthenelse{\proxyind=1}
					{\draw (10*\x,5) node[circle,draw=black,fill=gray!50,inner sep=\proxyradius pt] {};}
					{\draw (10*\x,5) node[voter] {};}
				}
				
				\foreach \x in {-2.5,-2.125,-2-1/32,2+1/16,2.25,3} {
					\draw (10*\x,5) node[emptyproxy] {};
				}
				
				\draw (10*2,5) node[proxy] {};
				\draw (10*-2,5) node[proxy] {};
				
				\draw[line, <->](10*-4,-2)--(0,-2) node at (10*-2,-2)[below]{$\Delta$};
				\draw[line, <->](0,-2)--(10*4,-2) node at (10*2,-2)[below]{$\Delta$};
				\draw[line,-latex] (5*10,8) to[out=135,in=45] (3*10,8);

			\end{tikzpicture}
			\caption{A dynamics that diverges.
				The two large black  dots indicate oscillation positions.
				The arrow indicates the first manipulation.
				\vspace{-3mm}}
			\label{fig:diverge_example}
		\end{figure}
	\end{example}
	
	\paragraph{Iterative Voting comparison.}
	Note that Example~\ref{xmpl:dynamics_diverge} not only shows that monotone better-response dynamics need not converge, it also shows a key difference between our setting and Iterative Voting. We say that a dynamic is \emph{acyclic} if there are no recurring states. For finite action sets, i.e., when the space of available better-responses for each agent is finite, acyclicity implies convergence. Example~\ref{xmpl:dynamics_diverge} demonstrates that for infinite action spaces this may not hold. 
	
	\paragraph{Interpretation of alpha.}
	In effect, $\alpha_t$ is the amount by which the outcome gets closer to the true median between steps. As $\Delta^t$ decreases, so does the leeway that proxies have to improve the outcome for themselves. While it is reasonable that $\alpha_t$ decreases as $\Delta^t$ decreases, Example~\ref{xmpl:dynamics_diverge} captures the behavior in which $\alpha_t$ decreases at a higher rate than $\Delta^t$.
	
	By restricting policies such that $\alpha_t$ and $\Delta^t$ decrease at the same rate, we can obtain convergence. Moreover, this guarantees that $\Delta^t$ itself converges to $0$, meaning that the outcome converges to the true median.
	
	While the example above shows that even monotone policies may diverge, this relies one the gaps between $\Delta^t$ becoming smaller and smaller. Fix some constant $\alpha<1$. We say that a meta step is \emph{big} if $\Delta^{t+\ell}<\alpha \Delta^{t}$, and otherwise it is \emph{small}.
	\begin{corollary}
		\label{corr:sufficient_step_implies_convergence}
		Consider a monotone better-response dynamics, and
		suppose there is only a finite number of small steps between any two big steps. Then the dynamics converges, and the limit PNE is the true median.
	\end{corollary}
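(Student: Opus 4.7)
The plan is to partition the dynamics at meta-step boundaries and track $\Delta^t$ across them. By Lemmas~\ref{lm:not_winner_manipulation_closer_to_median} and~\ref{lm:winner_manipulation_closer_to_median}, $\Delta$ strictly decreases across each meta-step, so the sequence of $\Delta$ values at boundaries is strictly decreasing and non-negative, hence convergent to some $\Delta^\infty \ge 0$. The goal is to show $\Delta^\infty = 0$.

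I would then split into two cases. If only finitely many meta-steps occur, the dynamics terminates at a PNE; I would argue separately (below) that $\Delta^\infty = 0$ there. Otherwise the dynamics is infinite, and the hypothesis forces infinitely many big meta-steps: if only finitely many big ones occur, then after the last one only small meta-steps remain, but by hypothesis these are finite in number, so the dynamics would terminate. Each big meta-step multiplies $\Delta$ by at most $\alpha < 1$, and intervening small meta-steps only shrink $\Delta$ further, so after $n$ big meta-steps $\Delta \le \alpha^n \Delta^0 \to 0$. Hence $\Delta^t \to 0$, i.e., $\swm{t} \to \tmed$.

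For the finite-termination case, I would argue by contradiction. Suppose $\Delta(s) > 0$ at the terminal state $s$ with $\wm{s} < \tmed$ (w.l.o.g.). Since monotonicity keeps $\statemed{s^t} = \tmed$ throughout, some proxy $k$ has $p_k \ge \tmed$---otherwise by Theorem~\ref{thm:manipulable_proxy_characterization} the initial truthful state is already a PNE and no meta-step could have occurred, making the claim trivial. Then $k$ can deviate to $s'_k = \tmed$: this merely shifts one voter from strictly right of $\tmed$ onto $\tmed$ itself, leaving the median at $\tmed$, and by Lemma~\ref{lm:nearest_proxy_to_median_wins} makes $k$ the new winner at outcome $\tmed$, strictly preferred by single-peakedness to $\wm{s} < \tmed$. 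This contradicts the PNE assumption.

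I expect the main obstacle to be the finite-termination case, and in particular checking that the deviation of a ``losing-side'' proxy to exactly $\tmed$ preserves the unweighted median at $\tmed$ and produces the desired winner. The delicate point is verifying that the definition of the (weighted) median together with the lexicographic tie-breaking does not disrupt the argument when a voter lands exactly on the current median, which I would address by a direct count of voters strictly below and strictly above $\tmed$ before and after the deviation.
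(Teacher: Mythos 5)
Your argument is correct and matches the paper's: the paper's entire proof is the one-line observation that after $T$ big meta-steps, $|\tmed-\wm{s}|<\alpha^T\Delta\rightarrow 0$, which is exactly your geometric-decay argument for the infinite case. Your additional treatment of finite termination --- showing that a terminal PNE with $\Delta>0$ would admit a deviation to $\tmed$ by a proxy on the far side (essentially re-running the ``$\Leftarrow$'' direction of Theorem~\ref{thm:manipulable_proxy_characterization}) --- fills a case the paper silently ignores, and is sound given monotonicity pins the unweighted median at $\tmed$. (Both you and the paper gloss over the degenerate situation where all proxies lie on one side of $\tmed$, in which the truthful state is already a PNE with outcome $\twm\neq\tmed$; there the corollary holds only vacuously under a reading in which no big step ever occurs.)
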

	\begin{proof}
		After $T$ big meta steps, we have that $|\tmed-\wm{s}|<\alpha^T \Delta \rightarrow 0$.\qed
	\end{proof}
	Moreover, the corollary still holds if $\alpha$ is not a constant but increases towards $1$ over time, as long as this does not occur too fast.\footnote{For example, we can allow $\alpha_k = \max(\alpha,1-\frac{1}{k})$, at the $k$'th big meta step.}
	\paragraph{Why are there big steps?}
	We argue that it is not reasonable that proxies will insist on small steps forever as in Example~\ref{xmpl:dynamics_diverge}. 
	
	To see why, note that while smaller steps are preferable to the moving proxy, this benefit get smaller and smaller. On the other hand,  the fraction of small meta-steps from all better-responses becomes smaller over time, so almost every better-response is big. 
	
	\paragraph{Why this result is good.}
	The true median is the outcome of the median voting rule. It is both Condorcet consistent and the minimal sum of distances from voters' true preferences. Thus, the median of all voters reflects the social optimum. As such, Corollary~\ref{corr:sufficient_step_implies_convergence} implies that the strategic behavior of proxies (with the above restrictions) can in fact produce a socially optimal and stable outcome.
	
	\subsection{Discretization}
	
	\paragraph{Justify discretization}
	In many real-world applications, the assumption that voters can express any position on the political spectrum $\R$ is unreasonable. Voters are unlikely to distinguish between positions that are too similar, and this is the case both for selecting their truthful position, and distinguishing between different proxy positions for delegation. In computerized settings, there is some limited resolution to the expression of preferences (e.g. a temperature or a monetary amount). As it turns out, any such limit eliminates the possibility of oscillation we encountered in the previous section.
	
	In this section, we assume w.l.o.g that the political spectrum is restricted to the set of all integers $\Z$.
	
	\paragraph{Convergence for discrete spaces.}
	For discrete spaces, every monotone policy meets the conditions of Corollary~\ref{corr:sufficient_step_implies_convergence}. This is due to the fact that every better-response made by a proxy with position that is not the current weighted median must decrease the distance to the true median by at least $1$ (as the minimal distance between every distinct possible positions). Thus, the conditions are met for $\alpha = 1-\frac{1}{\Delta^1}$. Therefore, for discrete spaces, every monotone better-response dynamics converges, and the outcome is the true median, which is the socially optimal outcome.
	
	\paragraph{Best response.}
	Furthermore, for discrete spaces (in contrast to continuous) there is a well-defined best-response, that is to reposition at a distance that is one step closer to the true median than the current winner on their opposite side of the median. In particular, the best-response is monotone.
	
	\paragraph{Connection to Iterative Voting.}
	Following the terminology of~\citep{meir2017iterative}, a game has the \emph{Finite Best Response Property (FBRP) from truth} if from any truthful state, when restricted to best-responses, the dynamics converges. Thus, SPGs with WM are FBRP from truth.

	However, for non-monotone policies convergence to a socially worse outcome is possible. We show tis in Appendix~\ref{appx:worse_sc_discrete}.
	
	Our conjecture is that convergence holds for the non-discrete case as well, and that ultimately proxies would have an incentive to deviate back to their original side of the median. Yet, this is a matter of future research.
	
	%%%%%%%%%%%%%%%%%%%%%%%%%%%%%%%%%%%%%%%%%%%%%%%%%%%%%%%%%%%%%%%%%%%%%%%%
	
	\section{Partial Information}
	\label{sec:partial_information}
	
	In previous sections we assumed that the proxies have complete information about the positions of proxies and followers alike. This assumption is common when analyzing adversarial behavior. However, is it reasonable in a proxy voting setting?
	
	Recall that one of the applications of proxy voting is to mitigate the adverse effects of partial participation, where voters want to avoid explicitly reporting their positions. Moreover, followers may not even know their exact position; instead, they only know how to rank proxies based on proximity. Thus, followers can still delegate their vote without the additional cognitive strain of determining their exact position.
	
	In this section we relax the assumption of complete information. First, it is worth noting that when proxies have no information about the positions of followers, proxy voting becomes strategyproof. To see this, consider the states appearing in Figure~\ref{fig:partial_indistinguishable}
	
	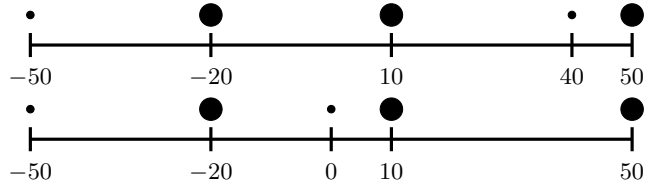
\begin{figure}[h]
		\centering
		\begin{tikzpicture}[
			line/.style={draw=black, very thick},
			proxy/.style={circle,draw=black,fill,inner sep=\proxyradius pt},
			emptyproxy/.style={circle,draw=black,fill=none,inner sep=\proxyradius pt},
			voter/.style={circle,draw=black,fill,inner sep=\voterradius pt},
			x=0.8mm, y=0.8mm, z=1mm]
			
			\draw[line, -](-50,0)--(50,0);
			
			\foreach \x/\t in {-50/0,40/0,-20/1,10/1,50/1} {
				\draw[line](\x,-2)--(\x,2) node[below=10]{$\x$};
				\ifthenelse{\t=1}
				{\draw (\x,5) node[proxy] {};}
				{\draw (\x,5) node[voter] {};}
				
			}
		\end{tikzpicture}
		\begin{tikzpicture}[
			line/.style={draw=black, very thick},
			proxy/.style={circle,draw=black,fill,inner sep=\proxyradius pt},
			emptyproxy/.style={circle,draw=black,fill=none,inner sep=\proxyradius pt},
			voter/.style={circle,draw=black,fill,inner sep=\voterradius pt},
			x=0.8mm, y=0.8mm, z=1mm]
			
			\draw[line, -](-50,0)--(50,0);
			
			\foreach \x/\t in {-50/0,-20/1,0/0,10/1,50/1} {
				\draw[line](\x,-2)--(\x,2) node[below=10]{$\x$};
				\ifthenelse{\t=1}
				{\draw (\x,5) node[proxy] {};}
				{\draw (\x,5) node[voter] {};}
				
			}
		\end{tikzpicture}
		\caption{An example of two states that are indistinguishable if followers' positions are unknown to proxies.\vspace{-3mm}}
		\label{fig:partial_indistinguishable}
	\end{figure}
	
	In the bottom state, the proxy at $-20$ can manipulate the outcome by deviating to $-5$. However, in the top state, the proxy has no manipulation. When proxies have no information except proxies positions, the proxies cannot distinguish between the two states. Thus, proxies do not even know \emph{if} they have a valid manipulation, let alone find one.
	
	However assuming no information at all sounds too restrictive, and we would like to consider intermediate cases that are more reasonable.
	
	\paragraph{Outline summary for section.}
	For the rest of this section, we first describe formally a less restrictive setting for the study of partial information. Then, we show that when only partial information is made available to voters, the strategic behavior of proxies may converge to a worse social outcome than the truthful state.
	
	\subsection{Model}
	\paragraph{Information sets.}
	We employ the framework described in~\citep{reijngoud2012voter}. In this setting, a \emph{poll information function (PIF)} $\sigma$ maps each state $s$ to an information set $\sigma\parenthesis{s}$. For example, in a Plurality voting scenario, we can think of a PIF that returns the score of each candidate, or just the candidate ranking, or even just the name of the winning candidate.
	The set $\sigma\parenthesis{s}$ is then communicated to all voters.

	Intuitively, we can think of the PIF as the results of a poll that is broadcast publicly after all private information is collected. 
	
	\paragraph{Poll information with delegation.}
	What information is likely to become public in our setting? Clearly the proxies' positions, as otherwise followers will not be able to delegate. 
	Other than that, we only assume that the identity (and position) of the winner is announced. 
	
	In this section only we use $p$ instead of $p|_N$ for the followers' positions, to simplify notation. To avoid confusion, we use $s^0$ rather than $p|_M$  for proxies' true positions. 
	
	We thus denote by $\sigma_{winner}$ the PIF that takes as input the state $s$ (proxies' current positions) and followers' true positions $p$ and returns $(s,\wm{s,p})$. That is, reveals the proxies' positions and the winner.
	Since this is the only PIF we use in this work, we just write $\sigma$. 
	
	In this setting, proxies are unable to distinguish between states that yield the same information by $\sigma$. Recall the two states from Figure~\ref{fig:partial_indistinguishable}. When only proxy positions are communicated by $\sigma$, the states are indistinguishable by the proxies.
	However, proxies can deduce an equivalent set of states that are consistent with the information available to them. In particular, both states in Figure~\ref{fig:partial_indistinguishable} would be in the same set.

	Formally, at any state $s$, the proxies know only the identity of the winner $j^*$, and their positions. We define the set of possible profiles as 
	$$\IS{s}{j^*}:= \{p\in \mathbb{R}^n: \sigma(s,p)=(s,j^*)\}.$$
	These are all possible positions of followers that are compatible with the revealed information.
	
	While in the previous sections $j^*=j^*(s)$ could be implicitly inferred from proxies' positions $s$ (since $p$ was known), in this section the state is defined as $(s,j^*)$, i.e. it explicitly contains all known information. 
	
	\paragraph{Dominating manipulations.}
	Following the terminology of~\citep{conitzer2011dominating}, we define a \emph{dominating manipulation} for Proxy $j$ as a position $s_j'$ that satisfy the following conditions. First, by reporting $s_j'$, there exists a profile $p'\in \IS{s}{j^*}$ that when combined with $s_j'$, results in a more preferable outcome. Second, for all other profiles in $\IS{s}{j^*}$, it is the case that $j$ weakly prefers them over the current outcome.
	
	More formally, let $\succ_{j}$ be a full order over all possible outcomes that define $j$'s true preferences. 
	Then, $s_{j}'$ \emph{dominates} $s_j$ in state $(s,j^*)$ if for any profile $p \in \IS{s}{j^*}$ it holds  that $\wm{{s_{-j}},s_{j}', p} \succcurlyeq_j \wm{s, p} = s_{j^*}$, and the preference is strict for at least some $p'\in   \IS{s}{j^*}$.
	
	In the special case where $\sigma$ returns the true followers' positions (no uncertainty), dominating manipulations are just better-responses.
	
	\subsection{Convergence under Partial Information}
	
	In appendix~\ref{appx:worse_sc_incomplete_information}, we show that dominating-manipulations dynamics may converge to a worse social outcome than the truthful state. However, while this is possible for general policies, it is not the case for policies that can guarantee monotonicity. In this section, we find a sufficient condition for convergence to the true median for the partial information setting.
	
	We say that a policy for Proxy $j \in M$ is \emph{strong-monotone} if for every state $(s,j^*)$ and every profile $p \in \IS{s}{j^*}$, it holds that $s_j$ and $\pi_j(s)$ are on the same side of $\statemed{s,p}$.
	Note that for complete information, strong-monotone policies are reduced to monotone policies.

	\paragraph{The median interval.}
	For a dynamics $\dyn{s}$, let $s^t_r,s^t_{\ell}$ be the positions of the closest proxies to the winner from right and left respectively at $t$. 
	
	Let $I^t:=\{\statemed{s^t,p}: p\in \IS{s^t}{j^*(s^t)}\}$ be the interval of possible positions for the median in $s^t$, and denote $\ell^t,r^t$ the lower and upper bounds of $I^t$, respectively.
	
	We define an interval $I^t$ recursively as follows. For $t=0$, set $I^0=\parenthesis{\frac{s^0_r+s^0_{j^*}}{2},\frac{s^0_{\ell}+s^0_{j^*}}{2}}$. For every $t> 0$ define $I^t$ as follows:
	\begin{itemize}
		\item If $j^* \ne \pmov{t-1}$ then $I^t=I^{t-1}\cap \parenthesis{\frac{s^t_r+s^t_{\pwin{0}}}{2},\frac{s^t_{\ell}+s^t_{\pwin{0}}}{2}}$.
		\item  Otherwise, if $s^{t-1}_{\pwin{t}} < s^t_{\pwin{t}}$ set $I^t = I^{t-1} \cap \left[ s^t_{\pwin{t}}, \infty \right)$, else if $s^{t-1}_{\pwin{t}} > s^t_{\pwin{t}}$ set $I^t = I^{t-1} \cap \left( -\infty, s^t_{\pwin{t}} \right]$
	\end{itemize}

	\begin{remark}
		Note that the position $\swin{t}$ is known to proxies even though the profile $p$ is unknown, since $\swin{t}=s^t_{j^*}$ and both of $s^t$ and $j^*=j^*(s^t)$ are known. Hence the proxies can indeed infer $I^t$ from the information they know.
	\end{remark}
	
	\paragraph{Strong-monotone dominating manipulations for non-winning proxies.}
	Then, the set of dominating manipulations at state $(s^t,j^*)$ that are also strong-monotone for Proxy $j$ with $p_j \le\swin{t}$ is the open interval:
	$$\parenthesis{ \min\braces{s^t_{\ell},\swin{t} - 2\abs{\swin{t}-\ell^t}}, \swin{t} },$$
	if $\ell^t < \swin{t}$, and empty otherwise. The set for proxies on the other side of the median is similar with respect to $r^t$.
	
	\paragraph{Strong-monotone dominating manipulations for winning proxies.}
	For winning proxies, they  have a dominating manipulation only if their current position is between their peak and $I^t$, and the closest proxy on the other side of $I^t$ is farther than their position. That is, w.l.o.g assume that $p_{\pwin{t}} < \swin{t} < \ell^t$, and $\abs{r^t-s^t_r} > \abs{r^t-\swin{t}}$. In this case, their set is $\parenthesis{r^t-\abs{r^t-\swin{t}}, \swin{t}}$. This is the equivalent of a meta-move in this setting. Their set is empty in any other case since either any beneficial deviation may cross the median, or a deviating may have a negative outcome.
	
	We get that a policy is strong monotone if it holds that if the position of a proxy is left (right) of $\ell^t$ ($r^t$), then their resulting strategy would have the same orientation with respect to $I^t$. Furthermore, as $I^t$ decreases with every step of a moving proxy, and that in turn decreases the set of strong-monotone dominating manipulations for each proxy, these sets are monotonically decreasing. If we impose a similar restriction as in the proof of Corollary~\ref{corr:sufficient_step_implies_convergence}, we get convergence with the true median as an outcome with a similar argument.
	
	\subsection{Rationalizing Monotonicity}
	In decision theory, the concept of regret is often used to model types of agents. Given a strategic decision made under uncertainty, the regret is the difference in utilities between the outcome, and the optimal strategy that the agent could use ex-post. A regret-averse or risk-neutral agent would select the strategy that minimizes the maximal regret.
	
	In what follows, we show that the minimax-regret policy guarantees monotonicity, and therefore if all proxies are regret-averse then even with partial information convergence to the true median is guaranteed.
	
	The following theorem shows that the minimax-regret strategy is strong-monotone.
	
	\begin{theorem}
		\label{thm:minimax_regret}
		If the policy of every proxy is strong-monotone in a dynamics up to step $t\ge 0$, then the minimax-regret strategy of every proxy is strong-monotone.
	\end{theorem}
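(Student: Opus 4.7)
The plan is to show that any candidate move $s'_j$ that violates strong-monotonicity is weakly dominated in worst-case regret by some strong-monotone alternative $s''_j$; since $u_j^*(p)$ is independent of $j$'s move, a profile-wise outcome-utility inequality $u_j(\wm{(s_{-j},s''_j),p}) \ge u_j(\wm{(s_{-j},s'_j),p})$ for every $p\in \IS{s}{j^*}$ lifts directly to $\max_p \mathrm{regret}(s''_j,p)\le \max_p \mathrm{regret}(s'_j,p)$, so the minimax-regret optimum admits a strong-monotone representative, which under the paper's deterministic tie-breaking can be taken to be strong-monotone.

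By the inductive hypothesis of strong-monotonicity up to step $t$, $s_j$ lies on the same side of $\statemed{s,p}$ as $p_j$ for every $p\in \IS{s}{j^*}$. WLOG assume both $p_j$ and $s_j$ lie to the right of every such median, i.e., $s_j\ge r:=\sup\{\statemed{s,p}:p\in \IS{s}{j^*}\}$; strong-monotone moves are then exactly those with $s'_j\ge r$. Fix a candidate $s'_j<r$, and propose the strong-monotone replacement $s''_j\ge r$ obtained via a matching construction against $s'_j$ (the natural first attempt is the reflection $s''_j=2r-s'_j$, which is always strong-monotone and has $|s''_j-r|=|s'_j-r|$; in subcases where this fails we pick $s''_j$ from the feasible window $[r,2r-s'_j]$ that preserves the matching).

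The profile-wise comparison splits into two cases. In the non-crossing case $s'_j\ge \statemed{s,p}$, both $s'_j$ and $s''_j$ lie right of $\statemed{s,p}$, so the unweighted median is unchanged under either move; Lemma~\ref{lm:nearest_proxy_to_median_wins} then identifies the winner as the proxy closest to $\statemed{s,p}$ in $(s_{-j},\cdot)$, and by single-peakedness of $j$ around $p_j\ge r$ the outcome under $s''_j$ is at least as close to $p_j$ as the one under $s'_j$. In the crossing case $s'_j<\statemed{s,p}$, $j$'s vote moves from the right to the left of $\statemed{s,p}$, so the new unweighted median shifts weakly leftward and Lemma~\ref{lm:nearest_proxy_to_median_wins} places the winner weakly left of $\statemed{s,p}$; under $s''_j$ the median is unchanged and the winner stays weakly right of $\statemed{s,p}$, and single-peakedness with $p_j\ge \statemed{s,p}$ closes the case.

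The main obstacle is the non-crossing subcase $s'_j\in[\statemed{s,p},r)$, where $s'_j$ may be strictly closer to $\statemed{s,p}$ than any strong-monotone strategy and can thus make $j$ itself the winner ``on the wrong side'', while $s''_j$ might hand the win to a left-side proxy farther from $p_j$. Resolving this requires exploiting the structural constraint $\statemed{s,p}\le r$ together with the fact that the closest left-side proxy is at most $2(r-\statemed{s,p})$ from $\statemed{s,p}$, so some $s''_j\in[r,2r-s'_j]$ beats that same left-side proxy on $j$'s side of the median and produces an outcome at distance $\le r-s'_j+(s'_j-\statemed{s,p})$ from $p_j$'s side. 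A secondary delicate step is in the crossing case: verifying that the leftward shift of the median is enough to force the winner onto the left of $\statemed{s,p}$, which follows by applying Lemma~\ref{lm:nearest_proxy_to_median_wins} to the post-shift median and using the info-set bound on the distance of the closest right-side proxy from $\statemed{s,p}$. Once these inequalities hold for every $p\in \IS{s}{j^*}$, the minimax-regret conclusion follows immediately, completing the induction.
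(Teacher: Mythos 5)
There is a genuine gap, and it sits at the foundation of your argument. Your plan is to find, for each non-strong-monotone candidate $s_j'$, a single strong-monotone $s_j''$ whose \emph{outcome} is weakly better for $j$ on \emph{every} profile $p\in \IS{s}{j^*}$, and then lift this to max-regret. That profile-wise domination is false, and you have in fact located the exact place where it breaks (your ``main obstacle'') without resolving it. Take your WLOG configuration ($p_j$ and $s_j$ right of every possible median, $r=r^t$), a candidate $s_j'\in[\ell^t,r^t)$, and a profile $p$ whose median $m$ is near $\ell^t$ and whose nearest left-side proxy sits just below $m$. Under $s_j'$, proxy $j$ may win at $s_j'$, at distance $p_j-s_j'$ from its peak. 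Any strong-monotone $s_j''\ge r^t$ satisfies $s_j''-m\ge r^t-\ell^t$, so it cannot beat that left-side proxy (your claimed window $[r,2r-s_j']$ only works when $m-s_L>r-m$, which need not hold), and the outcome lands at roughly $m\approx \ell^t$, strictly farther from $p_j$ than $s_j'$. So $s_j'$ strictly beats every strong-monotone alternative on that profile, and the profile-wise inequality you need cannot be arranged. Max-regret domination does not require profile-wise domination, but once you give that up you must actually compute and compare worst-case regrets --- which is what the paper does: it partitions the strategy space into three regions relative to $\ell^t$ and $\swin{t}$, computes the maximal regret on each (the two outer regions give at least $2\abs{\ell^t-\swin{t}}$, the middle region is minimized at $s_j'=\ell^t$ with value $\abs{\ell^t-\swin{t}}$), and concludes the minimizer is exactly $\ell^t$, which is weakly left of every possible median and hence strong-monotone.

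Two further problems. First, even if your domination held, it would only show that the minimax-regret value is \emph{attained by some} strong-monotone strategy; the theorem asserts that \emph{the} minimax-regret strategy is strong-monotone, and the paper establishes this by showing every other strategy is strictly worse in the non-degenerate case $\abs{\ell^t-\swin{t}}>0$ (and that in the degenerate case the entire argmin set is strong-monotone). The ``deterministic tie-breaking'' you invoke is the voting rule's tie-breaking among proxies, not a selection rule among a proxy's equally good strategies, so it cannot close this gap. Second, you do not treat the winning proxy at all; the paper needs a separate argument there, showing that the winner's current position (or peak) minimizes its maximal regret against medians anywhere in $I^t$.
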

	\begin{proof}
		First, consider a proxy $j$ with peak $p_j$ left of the current winner.
		For a possible strategy $s_{j}'$, we calculate the maximal regret by distinguishing between the possible values of $s_{j}'$:
		\begin{itemize}
			\item $s_{j}' \le \ell^t-\abs{\ell^t-\swin{t}}$-- if the median is right of the current winner, then ex-post there is nothing Proxy $j$ can do to change the outcome in their favor, thus the regret is $0$. Otherwise, the optimal position for them is the position symmetric to the current winner with respect to the median, whereas by reporting $s_{j}'$ the outcome does not change. Thus, the difference in utility is the distance between the current winner and the optimal position. The maximal regret is reached when the median is at $\ell^t$ (up $\varepsilon>0$) and is equal to $2\cdot \abs{\ell^t-\swin{t}}$.
			\item $s_{j}' \ge \swin{t}$-- It is still possible that the median is (weakly) left of the current winner. By reporting a position that is right of the current winner the median would be bounded by the current winner and therefore the outcome will not change. Thus, the maximal regret is at least $2\cdot \abs{\ell^t-\swin{t}}$.
			\item $\ell^t- \abs{\ell^t-\swin{t}} < s_{j}' < \swin{t}$-- as in the first case, if the median is right of the current winner then the regret is $0$. Otherwise, for every possible position of $\med{t}$, the optimal strategy for $j$ ex-post is $opt^t=\med{t}-\abs{\med{t}-\swin{t}}$. Therefore, if $opt^t<s_{j}'$ then the regret is $s_{j}'-opt^t$. Otherwise, $s_{j}'$ is further than the current winner from the current median and thus the outcome will not change. Hence, the regret is $\swin{t}-opt^t$. We get that the maximal regret is $\max_{\med{t}}\braces{\swin{t}-opt^t,\parenthesis{s_{j}'-opt^t}\cdot\mathbbm{1}_{opt^t<s_{j}'}}$. 
		\end{itemize}
		
		\begin{figure}[ht!]
			\centering
			\begin{tikzpicture}[x=1cm, y=1cm, z=1mm]
				\coordinate (sl) at (1,0);
				\coordinate (l) at (2,0);
				\coordinate (st) at (3,0);
				\coordinate (r) at (5,0);
				\coordinate (sr) at (7,0);
				\coordinate (diffl) at (0,2);
				\coordinate (diffl2) at (0,1);
				\coordinate (diffr) at (0,4);
				
				%axes, ticks and guides
				\draw[-latex] (0,-0.1) -- (8,0);
				\draw (sl)++(0,0.1) -- ++(0,-0.2) node[below] {$s_\ell^t$};
				\draw (st)++(0,0.1) -- ++(0,-0.2) node[below] {$\swin{t}$};
				\draw (sr)++(0,0.1) -- ++(0,-0.2) node[below] {$s_r^t$};
				\draw[dashed] (l)++(0,5) -- ++(0,-5.1) node[below] {$\ell^t$};
				\draw[dashed] (r)++(0,5) -- ++(0,-5.1) node[below] {$r^t$};
				
				\draw[-latex] (0,-0.1) -- (0,5);
				\draw[dashed] (diffl)++(8,0) -- ++(-8.1,0) node[left] {$\abs{\swin{t}-s^t_\ell}$};
				\draw[dashed] (diffl2)++(8,0) -- ++(-8.1,0) node[left] {$\abs{\swin{t}-\ell^t}$};
				\draw[dashed] (diffr)++(8,0) -- ++(-8.1,0) node[left] {$\abs{\swin{t}-s^t_r}$};
				
				% minimax func
				\draw[very thick] (diffl) -- ++(sl);
				\draw[very thick] (diffl-|sl) -- (diffl2-|l) -- (diffl-|st);
				\draw[very thick] (diffl)++(st) -- ++(2,0) -- (diffr-|sr);
				\draw[very thick] (diffr)++(sr) -- ++(1,0);
			\end{tikzpicture}
			\caption{Example of maximal regret values for a proxy with position left of the current winner.}
			\label{fig:my_label}
		\end{figure}
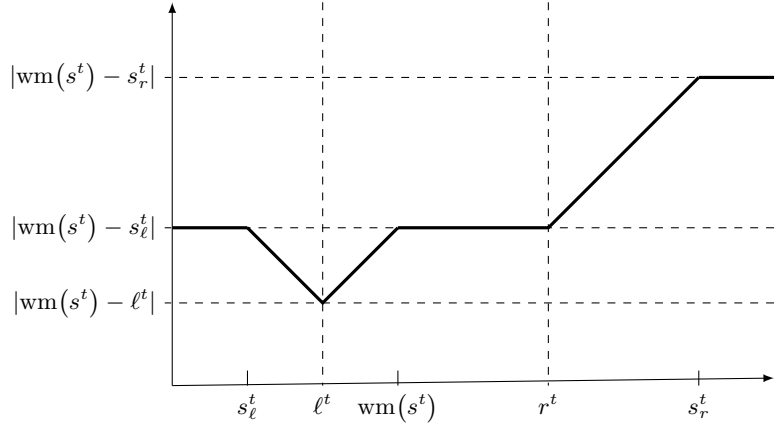
		
		If $\abs{\ell^t-\swin{t}} = 0$, then the regret for the first case is $0$, for the second it is positive and there are no strategies that fit the last case. Thus, the minimax-regret strategy in this case is the set $\parenthesis{-\infty,\ell^t}$, and every position in this set is strong-monotone.
		
		Otherwise, for the last case, $2\cdot \abs{\ell^t-\swin{t}}$ bounds the regret from above and for the other cases it bounds the regret from below, therefore the minimax regret must be attained for $\ell^t- \abs{\ell^t-\swin{t}} < s_{j}' < \swin{t}$. We argue that the minimax regret is attained at $\ell^t$. First, for $s_{j}'=\ell^t$, the maximal regret is $\abs{\ell^t-\swin{t}}$. To see this, if $opt_t>\ell^t$ then the regret is given by $\swin{t}-opt^t\le \abs{\ell^t-\swin{t}}$. Otherwise, the regret is given by
		$$\ell^t-opt^t\le \ell^t-\parenthesis{\ell^t- \abs{\ell^t-\swin{t}}}=\abs{\ell^t-\swin{t}}.$$
		Next, for $\ell^t < s_{j}' < \swin{t}$, if $\med{t}=\ell^t+\varepsilon$ for $\varepsilon>0$ then $$opt^t=\ell^t-\abs{\ell^t-\swin{t}}+2\varepsilon$$
		thus the regret is $s_{j}'-opt^t>\abs{\ell^t-\swin{t}}$. Finally, for $s_{j}' < \ell^t$ then if $\med{t} = \frac{\swin{t}+s_{j}'}{2}+\varepsilon$ then the regret is $\swin{t}-opt^t>\abs{\ell^t-\swin{t}}$.
		
		We get that the minimax-regret strategy for Proxy $j$ is $\ell^t \le \med{t}$. Thus, it is strong-monotone.
		
		For proxies with positions right of the current winner, the analysis is symmetric for $r^t$.
		
		For the winning proxy, if their reported position is at their peak, then it is the optimal position for them regardless of the underlying state. Thus, it is also their minimax-regret strategy (the maximal regret of their peak is $0$).
		
		Otherwise, w.l.o.g assume that the winning proxy's truthful position (peak) is left of their current position. By definition of $I^t$ and since all steps until $t$ are strong-monotone, it follows that $\ell^t \ge \swin{t}$.
		
		Consider the current position of the winning proxy. Their maximal regret is attained for the case that the median is at their current position, in this case their optimal position is at distance $\min\braces{\abs{s^t_r-\swin{t}},\abs{s^t_{\ell}-\swin{t}}}$ left of their position (up to $\varepsilon$). Thus, their maximal regret at this position is bounded from below by $\min\braces{\abs{s^t_r-\swin{t}},\abs{s^t_{\ell}-\swin{t}}}$. For every position right of their current position, for the same case the regret would be at least $\min\braces{\abs{s^t_r-\swin{t}},\abs{s^t_{\ell}-\swin{t}}}$, so it bounds the maximal regret from below. Finally, for every position left of their current position, if the median is at $r^t-\varepsilon$ then the winning position would be $s^t_r$, so the regret is bounded from below by $\abs{s^t_r-\swin{t}}\ge \min\braces{\abs{s^t_r-\swin{t}},\abs{s^t_{\ell}-\swin{t}}}$. Thus, their minimax-regret strategy is their current position, which is strong-monotone.\qed
	\end{proof}
	
	\begin{remark}
		Note that the minimax-regret policy is also in the set of dominating manipulations.
	\end{remark}

	Consequently, it follows that if the policy of all proxies is minimax-regret, then the resulting dynamics is strong-monotone and therefore converges to the true median.

	%%%%%%%%%%%%%%%%%%%%%%%%%%%%%%%%%%%%%%%%%%%%%%%%%%%%%%%%%%%%%%%%%%%%%%%%
	
	\section{Conclusions and Future Work}
	
	We introduced \emph{Strategic Proxy Games}, a framework to study strategic behavior of proxies in voting mechanisms.
	
	First, we demonstrated that in this model, the extension of the median voting rule to the weighted median voting rule via proxy voting maintains strategyproofness with respect to followers' positions. In particular, this suggests that with respect to follower positions, the delegation scheme is optimal for followers preferences. Our study uses the Tullock delegation scheme, however, other delegation models have been studied in the literature. In the one-step delegation domain, ~\citet{green2015direct} consider delegation that accounts for small errors in assessment of positions, and ~\citet{alon2015robust} consider social connections that influence the weight of proxies. Exploring the impact of different delegation models on the outcome of proxy voting and the strategic behavior of followers and proxies would be an interesting direction for future research. We point out that many of our results depend on the fact that the proxy who attracts the median voter wins. Our conjecture is that for delegation models that correlate well with distance the same property hold, and would yield similar results.
	In this research we focused on the median voting rule. We plan to study the implication of strategic proxy behavior in higher dimensions, as well as with other voting rules.
	
	We continued to study the strategic behavior of proxies, and showed that while strategyproofness does not extend to proxy voting, the distance of the outcome in a setting of repeated manipulations is bounded by the distance of the truthful outcome. Thus, in terms of distance, manipulations can only have a positive impact on the outcome. Moreover, when proxies maintain the integrity of their positions with respect to the median, the outcome converges to the social optimum. We further show that for discrete spaces, non-monotonicity can result in a worse social outcome. The combination of the above results show that in the context of proxy voting, both complete truthfulness and unbounded manipulation are sub-optimal. The assumption of monotonicity is a compromise between these extremities.
	In future work, we plan to further study non-monotone settings. In particular, we showed that truth-orientation bounds the distance of the outcome for the social optimum. We conjecture that under the stronger assumption of truth-bias, proxies would have an incentive to revert to a monotone state.
	
	Finally, we study the implications of partial information to the strategic behavior of proxies. While our results show that the outcome may increase the social cost, we also show that for policies that guarantee to be monotone, in particular minimax-regret, would converge to the social optimum. While the public information for voters that we consider is very minimal, i.e. it only includes the winning position and the positions of proxies, it may be possible to achieve the same positive results with even less information available. A particular case of interest is when the winning position is not made public, rather an estimate of it. This case may be more realistic for several reasons. First, if voters positions are estimated, then the outcome can be an estimate as well. This may be appropriate for settings where followers, for reason of e.g. cognitive strain or privacy, wish to communicate an interval of approved positions that expresses an estimate of their peak. This setting is somewhat related to that of ~\citet{green2015direct} mentioned above, and the setting proposed by~\citet{feldman2016variations}, where candidates admit attraction intervals, and voters may select any candidate that attracts them. Moreover, approximate positions of voters are more realistic in the context of polls.
	Another possibility is to allow for more strategic depth for proxies, where they communicate an approximate positions in an effort to maximize support.

	%%%%%%%%%%%%%%%%%%%%%%%%%%%%%%%%%%%%%%%%%%%%%%%%%%%%%%%%%%%%%%%%%%%%%%%%
	
	\subsubsection*{Acknowledgements.}
	This research was supported by the Israel Science Foundation (ISF; Grant No. 2539/20).

	\newpage
	
	\bibliography{mybib}

\begin{thebibliography}{34}
\providecommand{\natexlab}[1]{#1}
\providecommand{\url}[1]{\texttt{#1}}
\expandafter\ifx\csname urlstyle\endcsname\relax
  \providecommand{\doi}[1]{doi: #1}\else
  \providecommand{\doi}{doi: \begingroup \urlstyle{rm}\Url}\fi

\bibitem[Alon et~al.(2015)Alon, Feldman, Lev, and Tennenholtz]{alon2015robust}
N.~Alon, M.~Feldman, O.~Lev, and M.~Tennenholtz.
\newblock How robust is the wisdom of the crowds?
\newblock In \emph{Twenty-Fourth International Joint Conference on Artificial
  Intelligence}. Citeseer, 2015.

\bibitem[Bielous and Meir(2022)]{bielous2022proxy}
G.~Bielous and R.~Meir.
\newblock Proxy manipulation for better outcomes.
\newblock In \emph{Multi-Agent Systems: 19th European Conference, EUMAS 2022,
  D{\"u}sseldorf, Germany, September 14--16, 2022, Proceedings}, pages 79--95.
  Springer, 2022.

\bibitem[Black(1948)]{black1948rationale}
D.~Black.
\newblock On the rationale of group decision-making.
\newblock \emph{Journal of political economy}, 56\penalty0 (1):\penalty0
  23--34, 1948.

\bibitem[Br{\^a}nzei et~al.(2013)Br{\^a}nzei, Caragiannis, Morgenstern, and
  Procaccia]{branzei2013bad}
S.~Br{\^a}nzei, I.~Caragiannis, J.~Morgenstern, and A.~Procaccia.
\newblock How bad is selfish voting?
\newblock In \emph{Proceedings of the AAAI conference on artificial
  intelligence}, volume~27, 2013.

\bibitem[Caragiannis and Micha(2019)]{caragiannis2019contribution}
I.~Caragiannis and E.~Micha.
\newblock A contribution to the critique of liquid democracy.
\newblock In \emph{IJCAI}, pages 116--122, 2019.

\bibitem[Cohensius et~al.(2017)Cohensius, Manor, Meir, Meirom, and
  Orda]{cohensius2016proxy}
G.~Cohensius, S.~Manor, R.~Meir, E.~Meirom, and A.~Orda.
\newblock Proxy voting for better outcomes.
\newblock In \emph{AAMAS'17}, 2017.

\bibitem[Conitzer et~al.(2011)Conitzer, Walsh, and Xia]{conitzer2011dominating}
V.~Conitzer, T.~Walsh, and L.~Xia.
\newblock Dominating manipulations in voting with partial information.
\newblock In \emph{Twenty-Fifth AAAI Conference on Artificial Intelligence},
  2011.

\bibitem[Desmedt and Elkind(2010)]{desmedt2010equilibria}
Y.~Desmedt and E.~Elkind.
\newblock Equilibria of plurality voting with abstentions.
\newblock In \emph{Proceedings of the 11th ACM conference on Electronic
  commerce}, pages 347--356, 2010.

\bibitem[Downs(1957)]{downs1957economic}
A.~Downs.
\newblock \emph{An economic theory of democracy}.
\newblock Harper \& Row New York, 1957.

\bibitem[Dutta et~al.(2001)Dutta, Jackson, and Le~Breton]{dutta2001strategic}
B.~Dutta, M.~O. Jackson, and M.~Le~Breton.
\newblock Strategic candidacy and voting procedures.
\newblock \emph{Econometrica}, 69\penalty0 (4):\penalty0 1013--1037, 2001.

\bibitem[Feldman et~al.(2016)Feldman, Fiat, and
  Obraztsova]{feldman2016variations}
M.~Feldman, A.~Fiat, and S.~Obraztsova.
\newblock Variations on the hotelling-downs model.
\newblock In \emph{Proceedings of the AAAI Conference on Artificial
  Intelligence}, volume~30, 2016.

\bibitem[Gehrlein and Lepelley(2010)]{gehrlein2010voting}
W.~V. Gehrlein and D.~Lepelley.
\newblock \emph{Voting paradoxes and group coherence: the Condorcet efficiency
  of voting rules}.
\newblock Springer Science \& Business Media, 2010.

\bibitem[Ghodsi et~al.(2019)Ghodsi, Latifian, and
  Seddighin]{ghodsi2019distortion}
M.~Ghodsi, M.~Latifian, and M.~Seddighin.
\newblock On the distortion value of the elections with abstention.
\newblock In \emph{Proceedings of the AAAI Conference on Artificial
  Intelligence}, volume~33, pages 1981--1988, 2019.

\bibitem[Gibbard(1973)]{gibbard1973manipulation}
A.~Gibbard.
\newblock Manipulation of voting schemes: a general result.
\newblock \emph{Econometrica: journal of the Econometric Society}, pages
  587--601, 1973.

\bibitem[G{\"o}lz et~al.(2021)G{\"o}lz, Kahng, Mackenzie, and
  Procaccia]{golz2021fluid}
P.~G{\"o}lz, A.~Kahng, S.~Mackenzie, and A.~D. Procaccia.
\newblock The fluid mechanics of liquid democracy.
\newblock \emph{ACM Transactions on Economics and Computation}, 9\penalty0
  (4):\penalty0 1--39, 2021.

\bibitem[Grandi et~al.(2013)Grandi, Loreggia, Rossi, Venable, and
  Walsh]{grandi2013restricted}
U.~Grandi, A.~Loreggia, F.~Rossi, K.~B. Venable, and T.~Walsh.
\newblock Restricted manipulation in iterative voting: Condorcet efficiency and
  borda score.
\newblock In \emph{Algorithmic Decision Theory: Third International Conference,
  ADT 2013, Bruxelles, Belgium, November 12-14, 2013, Proceedings 3}, pages
  181--192. Springer, 2013.

\bibitem[Green-Armytage(2015)]{green2015direct}
J.~Green-Armytage.
\newblock Direct voting and proxy voting.
\newblock \emph{Constitutional Political Economy}, 26\penalty0 (2):\penalty0
  190--220, 2015.

\bibitem[Halpern et~al.(2021)Halpern, Halpern, Jadbabaie, Mossel, Procaccia,
  and Revel]{halpern2021defense}
D.~Halpern, J.~Y. Halpern, A.~Jadbabaie, E.~Mossel, A.~D. Procaccia, and
  M.~Revel.
\newblock In defense of liquid democracy.
\newblock \emph{arXiv preprint arXiv:2107.11868}, 2021.

\bibitem[Hotelling(1929)]{hotelling1929stability}
H.~Hotelling.
\newblock Stability in competition.
\newblock \emph{The Economic Journal}, 39\penalty0 (153):\penalty0 41--57,
  1929.

\bibitem[J{\"o}nsson and {\"O}rnebring(2011)]{jonsson2011user}
A.~M. J{\"o}nsson and H.~{\"O}rnebring.
\newblock User-generated content and the news: Empowerment of citizens or
  interactive illusion?
\newblock \emph{Journalism Practice}, 5\penalty0 (2):\penalty0 127--144, 2011.

\bibitem[Kahng et~al.(2021)Kahng, Mackenzie, and Procaccia]{kahng2021liquid}
A.~Kahng, S.~Mackenzie, and A.~Procaccia.
\newblock Liquid democracy: An algorithmic perspective.
\newblock \emph{Journal of Artificial Intelligence Research}, 70:\penalty0
  1223--1252, 2021.

\bibitem[Meir(2017)]{meir2017iterative}
R.~Meir.
\newblock Iterative voting.
\newblock \emph{Trends in computational social choice}, pages 69--86, 2017.

\bibitem[Meir(2018)]{meir2018strategic}
R.~Meir.
\newblock Strategic voting.
\newblock \emph{Synthesis lectures on artificial intelligence and machine
  learning}, 13\penalty0 (1):\penalty0 1--167, 2018.

\bibitem[Meir et~al.(2010)Meir, Polukarov, Rosenschein, and
  Jennings]{meir2010convergence}
R.~Meir, M.~Polukarov, J.~Rosenschein, and N.~Jennings.
\newblock Convergence to equilibria in plurality voting.
\newblock In \emph{Proceedings of the AAAI conference on artificial
  intelligence}, volume~24, pages 823--828, 2010.

\bibitem[Meir et~al.(2014)Meir, Lev, and Rosenschein]{meir2014local}
R.~Meir, O.~Lev, and J.~S. Rosenschein.
\newblock A local-dominance theory of voting equilibria.
\newblock In \emph{Proceedings of the fifteenth ACM conference on Economics and
  computation}, pages 313--330, 2014.

\bibitem[Moulin(1980)]{moulin1980strategy}
H.~Moulin.
\newblock On strategy-proofness and single peakedness.
\newblock \emph{Public Choice}, 35\penalty0 (4):\penalty0 437--455, 1980.

\bibitem[Petrik(2009)]{petrik2009participation}
K.~Petrik.
\newblock Participation and e-democracy how to utilize web 2.0 for policy
  decision-making.
\newblock In \emph{10th Annual International Conference on Digital Government
  Research: Social Networks: Making Connections between Citizens, Data and
  Government}, pages 254--263, 2009.

\bibitem[Procaccia and Rosenschein(2006)]{procaccia2006distortion}
A.~D. Procaccia and J.~S. Rosenschein.
\newblock The distortion of cardinal preferences in voting.
\newblock In \emph{International Workshop on Cooperative Information Agents},
  pages 317--331. Springer, 2006.

\bibitem[Reijngoud and Endriss(2012)]{reijngoud2012voter}
A.~Reijngoud and U.~Endriss.
\newblock Voter response to iterated poll information.
\newblock In \emph{Proceedings of the 11th International Conference on
  Autonomous Agents and Multiagent Systems-Volume 2}, pages 635--644, 2012.

\bibitem[Riddick and Butcher(1991)]{riddick1991riddick}
F.~M. Riddick and M.~H. Butcher.
\newblock \emph{Riddick's Rules of procedure: a modern guide to faster and more
  efficient meetings}.
\newblock Madison Books, 1991.

\bibitem[Sabato et~al.(2017)Sabato, Obraztsova, Rabinovich, and
  Rosenschein]{sabato2017real}
I.~Sabato, S.~Obraztsova, Z.~Rabinovich, and J.~S. Rosenschein.
\newblock Real candidacy games: a new model for strategic candidacy.
\newblock In \emph{Proceedings of the 16th Conference on Autonomous Agents and
  MultiAgent Systems}, pages 867--875, 2017.

\bibitem[Satterthwaite(1975)]{satterthwaite1975strategy}
M.~A. Satterthwaite.
\newblock Strategy-proofness and arrow's conditions: Existence and
  correspondence theorems for voting procedures and social welfare functions.
\newblock \emph{Journal of economic theory}, 10\penalty0 (2):\penalty0
  187--217, 1975.

\bibitem[Schaupp and Carter(2005)]{schaupp2005voting}
L.~C. Schaupp and L.~Carter.
\newblock E-voting: from apathy to adoption.
\newblock \emph{Journal of Enterprise Information Management}, 2005.

\bibitem[Tullock(1967)]{tullock1967proportional}
G.~Tullock.
\newblock Proportional representation.
\newblock \emph{Toward a mathematics of politics}, pages 144--157, 1967.

\end{thebibliography}
	
	\newpage
	
	\appendix
	
	\section{Convergence to sub-optimal outcome for discrete spaces}
	\label{appx:worse_sc_discrete}
	
	Consider the following SPG. The set of proxies is $M=\braces{1,2,3,4,5}$ with positions $p_{4}=p_{3}=p_{2}=-13$, $p_{1}=-11$ and $p_{5}=12$. In addition, there are four followers $N=\braces{6,7,8,9}$ such that $p_6=p_7=5$, $p_8=1$ and $p_9=0$ is the true median. For the truthful state the winner is Proxy $1$ and the outcome is $-11$. The social cost is
	\begin{align*}
		SC &= 3\cdot\abs{-13 - \parenthesis{-11}}+\abs{0-\parenthesis{-11}}+\abs{1-\parenthesis{-11}}+2\cdot\abs{5-\parenthesis{-11}}+\abs{12-\parenthesis{-11}} \\
		&= 3\cdot 2 + 11 + 12 + 2\cdot 16 + 23 = 84
	\end{align*}
	
	Next, Proxy $5$ is the only proxy that has a better-response, and they report the position $\smov{0}=10$ and wins. At the next steps, Proxies $2,3$ and $4$ are the moving one-by-one, reporting the positions $9,8$ and $7$ respectively. At each step the moving proxy also wins, and it can be easily verifies that these positions are better responses. Then, Proxy $1$ reports $\smov{5}=4$, and wins. Consider the state $\parenthesis{s^5_{-5}, 6}$. Recall that we assume the existence of a deterministic tie-breaking scheme. We distinguish between the following:
	\begin{itemize}
		\item Proxy $1$ wins at position $4$. In this case, Proxy $5$ reports the position $5$, that is $s^6=\parenthesis{s^5_{-5}, 6}$. We argue that this a PNE. For Proxy $5$, since the distance of Proxy $1$ from the current median $5$ in $s^6$ is $1$, and $1$ would win the tie-break if Proxy $5$ reports $6$, then their better-response set is empty. For the other proxies, since Proxy $5$'s reported position is at the current median, they cannot get closer to the median. Moreover, no single proxy can change the position of the current median. Therefore, they do not have a better response.
		\item Proxy $5$ wins at position $6$. In this case $5$ reports the position $6$, that is, $s^6=\parenthesis{s^5_{-5},6}$. Then Proxy $1$ reports $5$, i.e. $s^7=\parenthesis{s^6_{-1},5}$. We argue that this is a PNE, with a similar argument.
	\end{itemize}
	
	Figure~\ref{fig:discrete_converge_worse} demonstrates the dynamics for the first case.
	
	\begin{figure}[ht!]
		\centering
		\begin{tikzpicture}[
			line/.style={draw=black, very thick}, 
			x=4.5mm, y=8mm, z=1mm]
			\coordinate (p1) at (-13,0);
			\coordinate (p2) at (-13,0);
			\coordinate (p3) at (-13,0);
			\coordinate (p4) at (-11,0);
			\coordinate (p5) at (12,0);
			\coordinate (f1) at (0,0);
			\coordinate (f2) at (1,0);
			\coordinate (f3) at (5,0);
			\coordinate (f4) at (5,0);
			
			\draw[line, -latex] (-14,0)--(13,0);
			\draw (p1)+(0,0.2) -- ++(0,-0.2) node[below] {$-13$};
			\draw (p1)+(0,0.5) node[circle,draw=black,fill=black!30!white,inner sep=3pt] {};
			\draw (p2)+(0,1) node[circle,draw=black,fill=black!30!white,inner sep=3pt] {};
			\draw (p3)+(0,1.5) node[circle,draw=black,fill=black!30!white,inner sep=3pt] {};
			
			\draw (f4)+(2,0.5) node[circle,draw=black,fill=black,inner sep=3pt] {};
			\draw (f4)+(3,0.5) node[circle,draw=black,fill=black,inner sep=3pt] {};
			\draw (f4)+(4,0.5) node[circle,draw=black,fill=black,inner sep=3pt] {};
			
			\draw (p4)+(0,0.2) -- ++(0,-0.2) node[below] {$-11$};
			\draw (p4)+(0,0.5) node[circle,draw=black,fill=black!30!white,inner sep=3pt] {};
			\draw (f4)+(-1,0.5) node[circle,draw=black,fill=black,inner sep=3pt] {};
			
			\draw (f1)+(0,0.2) -- ++(0,-0.2) node[below] {$0$};
			\draw (f1)+(0,0.5) node[circle,draw=black,fill=black,inner sep=1pt] {};
			
			\draw (f2)+(0,0.2) -- ++(0,-0.2) node[below] {$1$};
			\draw (f2)+(0,0.5) node[circle,draw=black,fill=black,inner sep=1pt] {};
			
			\draw (f3)+(0,0.2) -- ++(0,-0.2) node[below] {$5$};
			\draw (f3)+(0,0.5) node[circle,draw=black,fill=black,inner sep=1pt] {};
			\draw (f4)+(0,1) node[circle,draw=black,fill=black,inner sep=1pt] {};
			
			\draw (p5)+(0,0.2) -- ++(0,-0.2) node[below] {$12$};
			\draw (p5)+(0,0.5) node[circle,draw=black,fill=black!30!white,inner sep=3pt] {};
			\draw (p5)+(-2,0.5) node[circle,draw=black,fill=white,inner sep=3pt] {};
			\draw (f4)+(0,1.5) node[circle,draw=black,fill=black,inner sep=3pt] {};
			
			\draw[-latex,dashed] (p5)+(0,0.7) to[in=45,out=135] ++(-2,0.7) to[in=45,out=135] ++(-5,1.2);
			\draw[-latex,dashed] (p4)+(0,0.7) to[in=135,out=45] ++(15,0.7);
		\end{tikzpicture}
		\caption{The SPG, large dots indicate proxies, small dots are followers. Black are final positions, white are intermediate positions and gray are original positions.}
		\label{fig:discrete_converge_worse}
	\end{figure}
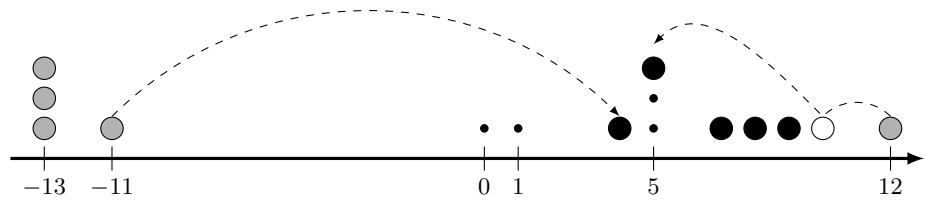
	
	In either case we get a PNE where the outcome is $5$. The social cost is
	\begin{align*}
		SC' &= 3\cdot\abs{-13 - 5}+\abs{-11-5}+\abs{0-5}+\abs{1-5}+\abs{12-5} \\
		&= 3\cdot 18 + 16 + 5 + 4 + 7 = 86
	\end{align*}
	That is strictly greater than the social cost of the truthful outcome.
	
	Note that for the unrestricted (non-discrete) setting, this is not a PNE since the winner in either case has a better-response, for example by reporting a position that is at a distance of $0.5$ from the median in the direction of their peak.
	
	\section{Convergence to sub-optimal outcome for incomplete information}
	\label{appx:worse_sc_incomplete_information}
	
	Consider the SPG appearing in Figure~\ref{fig:partial_information_unrestricted_simple}.
	
	%%%%%%%%%%%%%%%%%%%%%%%%%%%%%DETAILED EXAMPLE%%%%%%%%%%%%%%%%%%%
	
	\begin{figure}[ht!]
		\centering
		\begin{tikzpicture}[
			line/.style={draw=black, very thick}, 
			darkstyle/.style={circle,draw,fill=gray!40,minimum size=20},
			x=0.8mm, y=0.8mm, z=1mm]
			
			\draw[line, -](-50,30)--(90,30);
			
			\foreach \x/\t in {-50/0,-30/1,0/0,10/0,90/1} {
				\draw[line](\x,28)--(\x,32) node[below=10]{$\x$};
				\ifthenelse{\t=1}
				{\draw (\x,35) node[circle,draw=black,fill,inner sep=\proxyradius pt] {};}
				{\draw (\x,35) node[circle,draw=black,fill,inner sep=\voterradius pt] {};}
			}
		\end{tikzpicture}
		\caption{The SPG, large dots are proxies, small dots are followers.}
		\label{fig:partial_information_unrestricted_simple}
	\end{figure}
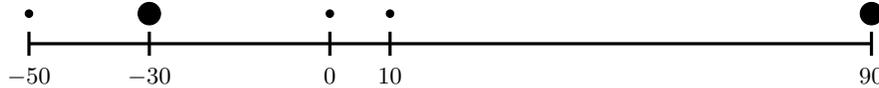
	
	The true positions are $p = \parenthesis{-30,90,-50,0,10}$. There are two proxies $M = \braces{1,2}$ with positions $p_{1} = -30$ and $p_{2} = 90$, and $3$ followers. Note that proxies and followers are unaware to the positions of other followers. The median is $0$, and the weighted median is $-30$. The social cost is:
	\begin{align*}
		SC = \abs{-50 &- \parenthesis{-30}} + \abs{-30 - \parenthesis{-30}} \\ &+ \abs{-30 - 0} + \abs{-30 - 10} + \abs{-30 -90}\\ &= 20 + 0 + 30 + 40 + 120 = 210
	\end{align*}
	
	The proxies can deduce from their available information that the true median is in the interval $\parenthesis{-\infty,30}$.
	
	Next, for Proxy $2$, consider the position ${s^1_{2}}' = 29$. Note that it is a dominating manipulation. The outcome of $s^2=\parenthesis{s^1_{-2},{s^1_{2}}'}$ is ${s^1_{2}}'=29$. It follows that the true median is in the interval $\parenthesis{-0.5,30}$. In particular, note that the new information that the proxies get has no effect on the right bound of the interval.
	
	Next, for Proxy $1$ in $s^2$, the only information that Proxy $1$ has is that their position is $-30$, and that the position of Proxy $2$ in $s^2$ is $29$, and it is the outcome of $s^2$. Consider the position ${s^2_{1}}' = 25$. Again, this is a dominating manipulation for Proxy $1$. The outcome of $s^3=\parenthesis{s^2_{-1},{s^2_{1}}'}$ is ${s^2_{1}}'=25$. Therefore, the true median is in the interval $\parenthesis{-0.5,27}$.
	
	Consider $s^3$. For Proxy $1$, since the true median can be at $27-\varepsilon$ for every $\varepsilon>0$, every position left of their current position may prove to be further than Proxy $2$'s position in $s^3$, resulting in a worse outcome. Thus, their set of dominating manipulations is empty. The set of dominating manipulations for Proxy $2$ is $\parenthesis{25,29}$. For every position ${s^3_{2}}' \in \parenthesis{25,29}$, the outcome of $\parenthesis{s^3_{-2}, {s^3_{2}}'}$ is $s^3_{2} = 25$. The true median is in the interval $\parenthesis{-0.5,\frac{25+{s^3_{2}}'}{2}}$ and the set of dominating manipulations for Proxy $2$ is $\parenthesis{25,{s^3_{2}}'}$. Note that this holds for every $t\ge 3$ s.t. for every $3 \le t' \le t$ the position of Proxy $1$ in $s^{t'}$ is $25$. That is, the outcome is $25$, the true median is in the interval $\parenthesis{25,\frac{25+{s^t_{2}}'}{2}}$, the set of dominating manipulations for Proxy $1$ is empty, and the set of dominating manipulations for Proxy $2$ is $\parenthesis{25, {s^t_{2}}'}$. Thus, Proxy $1$'s position does not change, and the sequence of positions of Proxy $2$ is monotonically decreasing and bounded, therefore it has a limit. We get that the dynamics converges, and the outcome of the limit state is $25$. Figure ~\ref{fig:partial_information_unrestricted} demonstrates the dynamics.
	
	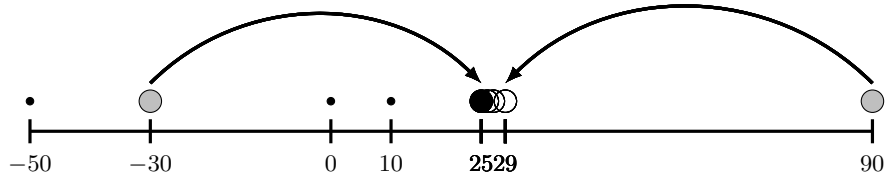
\begin{figure}[ht!]
		\centering
		\begin{tikzpicture}[
			line/.style={draw=black, very thick},
			proxy/.style={circle,draw=black,fill,inner sep=\proxyradius pt},
			emptyproxy/.style={circle,draw=black,fill=none,inner sep=\proxyradius pt},
			voter/.style={circle,draw=black,fill,inner sep=\voterradius pt},
			x=0.8mm, y=0.8mm, z=1mm]
			
			\draw[line, -](-50,0)--(90,0);
			
			\foreach \x/\t in {-50/0,-30/1,0/0,10/0,90/1} {
				\draw[line](\x,-2)--(\x,2) node[below=10]{$\x$};
				\ifthenelse{\t=1}
				{\draw (\x,5) node[circle,draw=black,fill=gray!50,inner sep=\proxyradius pt] {};}
				{\draw (\x,5) node[voter] {};}
				
				\foreach \x in {29,27,26} {
					\draw (\x,5) node[emptyproxy] {};
				}
				\draw[line](29,-2)--(29,2) node[below=10]{$29$};
				\draw (25,5) node[proxy] {};
				\draw[line](25,-2)--(25,2) node[below=10]{$25$};
				
				\draw[line,-latex] (90,8) to[out=135,in=45] (29,8);
				\draw[line,-latex] (-30,8) to[out=45,in=135] (25,8);
				
			}
		\end{tikzpicture}
		\caption{An example that converges to a worse outcome than the truthful state. Gray dots indicate truthful positions of proxies, empty dots indicate positions of dominating manipulations, full dot indicate limit positions. Small full dots are followers.}
		\label{fig:partial_information_unrestricted}
	\end{figure}
	
	The social cost of the outcome is
	\begin{align*}
		SC' = \abs{-50 &- 25} + \abs{-30 - 25} \\ &+ \abs{0 - 25} + \abs{10 - 25} + \abs{90 -25}\\ &= 75 + 55 + 25 + 15 + 65 = 235
	\end{align*}
	
	Note that complete information prevents the convergence demonstrated by this counter-example. With complete information, the true median is public information. Proxy $1$ can exploit this to strategically report a symmetric position on the opposite side of their peak. Thus, their better-response set is not empty.
	
\end{document}